%
%
%
%

\documentclass[prodmode]{acmsmall} 

\usepackage[square,sort,comma,numbers]{natbib}

\usepackage{balance}
\usepackage{cancel,amsmath,float,balance,afterpage} 
\usepackage{multicol,rotating}

%
%
\usepackage{courier}            
\usepackage[scaled]{helvet} 
\usepackage{url}                  
\usepackage{listings}          
\usepackage{enumitem}      
\usepackage[colorlinks=true,allcolors=blue,breaklinks,draft=false]{hyperref}   

\usepackage{comment} 
\usepackage{tikz}
\usepackage{lipsum}
\usepackage{xspace}
\usepackage{amssymb}

\usepackage{amsmath}
\usepackage{comment}
\usepackage[lined,boxed,commentsnumbered, ruled,vlined, resetcount]{algorithm2e}
\usepackage{algpascal}
\usepackage{algc}
\usepackage{algcompatible}
\usepackage{algpseudocode}
\usepackage{listings}
\usepackage{xcolor, xspace}
\usepackage{bm}

\usepackage{caption}
\usepackage{setspace}
\usepackage{color,verbatim,mathabx}
\usepackage{graphicx}
\usepackage{listings}
\usepackage{multirow}
\usepackage{rotating}
\usepackage{multicol}
\usepackage{relsize}
\usepackage{pgfplots}
\usetikzlibrary{fit}
\usetikzlibrary{calc}
\usetikzlibrary{positioning}
\usetikzlibrary{shapes}
\usetikzlibrary{arrows}
\usepgflibrary{arrows}
\setcounter{tocdepth}{3}
\usepackage{graphicx}
\usepackage{xifthen}
\usepackage{pgfplotstable}
\usepackage{textcomp}

\definecolor{pink}  {rgb}{0.67, 0.05, 0.57} 
\definecolor{red}   {rgb}{0.87, 0.20, 0.00} 
\definecolor{green} {rgb}{0.00, 0.47, 0.00} 
\definecolor{violet}{rgb}{0.41, 0.12, 0.61} 
\definecolor{brown} {rgb}{0.39, 0.22, 0.13} 

\definecolor{Brown}{cmyk}{0,0.81,1,0.60}
\definecolor{OliveGreen}{cmyk}{0.64,0,0.95,0.40}
\definecolor{CadetBlue}{cmyk}{0.62,0.57,0.23,0}
\definecolor{lightlightgray}{gray}{0.9}

\newcommand{\bgt}{\eta}

\newcommand*\circled[1]{\tikz[baseline=(char.base)]{
	\node[shape=circle,draw,inner sep=0pt,fill={green!20!white}] (char) {#1};}}
\newcommand{\Cross}{$\mathbin{\tikz [x=1.3ex,y=1.3ex,line width=.2ex, red] \draw (0,0) -- (.8,.8) (0,.8) -- (.8,0);}$}%

\newcommand{\killset}{\textsl{kill}}
\newcommand{\singletons}{\textsl{singletons}}
\newcommand{\cxtsingletons}{\textsl{cxtSingletons}}

\newcommand{\svf}{\textsc{SVF}\xspace}
\newcommand{\vfsu}{\textsc{Supa}\xspace}
\newcommand{\supa}{\vfsu}
\newcommand{\vfsuFS}{\textsc{Supa-FS}\xspace}
\newcommand{\vfsuFSCS}{\textsc{Supa-FSCS}\xspace}

\newcommand{\SFS}{\mbox{\textsc{SFS}}\xspace}
\newcommand{\FSCS}{\mbox{\textsc{FSCS}}\xspace}
\newcommand{\FS}{\textsc{FS}\xspace}
\newcommand{\FI}{\mbox{\textsc{FI}}\xspace}

\newcommand{\pts}{\emph{pt}}
\newcommand{\apts}{\emph{AnderPts}}
\newcommand{\aptg}{\emph{ptg}}

\newcommand{\allocation}[1]{\widehat{#1}\xspace}

\newcommand{\lab}{\ell\xspace}

\newcommand{\cxt}{c\xspace}
\newcommand{\cc}{\kappa\xspace}

\newcommand{\cpush}{\oplus\xspace}
\newcommand{\cpop}{\ominus\xspace}

\newcommand{\vfedge}[1]{\xrightarrow{#1}}

\newcommand{\calQ}{\mathcal{Q}\xspace}
\newcommand{\calI}{\mathcal{I}\xspace}

\newcommand{\calC}{\mathcal{C}\xspace}

\newcommand{\calG}{\mathcal{G}\xspace}

\newcommand{\calV}{\mathcal{V}\xspace}
\newcommand{\calL}{\mathcal{L}\xspace}

\newcommand{\calA}{\mathcal{A}\xspace}
\newcommand{\calS}{\mathcal{S}\xspace}
\newcommand{\calF}{\mathcal{F}\xspace}
\newcommand{\calP}{\mathcal{P}\xspace}
\newcommand{\calO}{\mathcal{O}\xspace}

\newcommand{\bbE}{\mathbb{E}\xspace}

\newcommand{\bbG}{\mathbb{G}\xspace}
\newcommand{\bbU}{\mathbb{U}\xspace}
\newcommand{\bbD}{\mathbb{D}\xspace}

\newcommand{\subvfg}{{\rm vfg}}

\newcommand{\vfreach}{\hookleftarrow}
\newcommand{\locvar}{{lv}\xspace}
\newcommand{\locvars}{\mathbb{V}\xspace}
\newcommand{\lvar}[1]{\langle#1\rangle\xspace}
\newcommand{\singlevf}{e\xspace}

\newcommand{\cupeq}{\ \cup\!\!=\xspace}

\newcommand{\ruledef}[3]{ #1 & $\frac{\begin{array}[c]{c}#2\end{array}}{\begin{array}[c]{c}#3\end{array}}$}

\newcommand{\rulename}[1]{\rulelab{\MakeUppercase{#1}}}
\newcommand{\rulelab}[1]{{\small\texttt{[{#1}]}}}

\newcommand{\mathmu}{\textcolor{green}{\boldsymbol{\mu}}}
\newcommand{\mathchi}{\textcolor{green}{\boldsymbol{\chi}}}
\newcommand{\muop}[1]{\textcolor{green}{\mathmu(#1)}}
\newcommand{\chiop}[2]{\textcolor{green}{#1\!=\!\mathchi(#2)}}

\SetAlFnt{\small}
\SetAlCapFnt{\small}
\SetAlCapNameFnt{\small}
\SetAlCapHSkip{0pt}
\IncMargin{-\parindent}

\acmVolume{9}
\acmNumber{4}
\acmArticle{39}
\acmYear{2010}
\acmMonth{3}

\doi{0000001.0000001}

\issn{1234-56789}

\makeatletter
\def\runningfoot{\def\@runningfoot{}}
\def\firstfoot{\def\@firstfoot{}}
\makeatother


\begin{document}

\markboth{Yulei Sui and Jingling Xue}{Demand-Driven Flow-Sensitive Pointer Analysis}

\title{Demand-Driven Pointer Analysis with Strong Updates via Value-Flow Refinement}
\author{Yulei Sui
\affil{School of Computer Science and Engineering, UNSW Australia}
Jingling Xue
\affil{School of Computer Science and Engineering, UNSW Australia}
}

\begin{abstract}
We present a new demand-driven flow- and context-sensitive pointer analysis with strong updates for C programs, 
called \vfsu, that enables computing points-to information via value-flow refinement, in environments with small time
and memory budgets such as IDEs. 
We formulate \vfsu by solving a graph-reachability problem 
on an inter-procedural value-flow graph representing a program's
def-use chains, which are pre-computed 
efficiently but over-approximately.
To answer a client query (a request for a variable's points-to
set), \vfsu reasons about the flow of values along the pre-computed
def-use chains sparsely
(rather than across all program points), by performing only the work
necessary for the query (rather than analyzing the whole program).
In particular, strong updates are performed to filter out
spurious def-use chains through value-flow refinement
as long as the total budget is not exhausted. 
\vfsu facilitates efficiency and precision tradeoffs by applying
different pointer analyses in a hybrid multi-stage analysis framework.

We have implemented \vfsu in LLVM (3.5.0) and evaluate it
by choosing uninitialized pointer detection as a major
client on 18 open-source C programs.
As the analysis budget increases,
\vfsu achieves improved precision, with its single-stage
flow-sensitive analysis reaching 97.4\% of that
achieved by whole-program flow-sensitive
analysis by consuming about 0.18 seconds
and 65KB of memory per query, on average (with a
budget of at most 10000 value-flow edges per query).
With context-sensitivity also considered, \vfsu's
two-stage analysis becomes more precise for some
programs but also incurs more analysis times. 
\vfsu is also amenable to parallelization.
A parallel implementation of its single-stage flow-sensitive analysis achieves a speedup of 
up to 6.9x with an average  of 3.05x 
a 8-core machine with respect
its sequential version.
\end{abstract}

%
%
\begin{CCSXML}
<ccs2012>
<concept>
<concept_id>10011007.10011006.10011041</concept_id>
<concept_desc>Software and its engineering~Software verification and validation</concept_desc>
<concept_significance>500</concept_significance>
</concept>
<concept>
<concept_id>10011007.10011074.10011099.10011102</concept_id>
<concept_desc>Software and its engineering~Software defect analysis</concept_desc>
<concept_significance>100</concept_significance>
</concept>
<concept>
<concept_id>10003752.10003753.10003761</concept_id>
<concept_desc>Theory of computation~Program reasoning</concept_desc>
<concept_significance>100</concept_significance>
</concept>
</ccs2012>
\end{CCSXML}

\ccsdesc[100]{Software and its engineering~Software verification and validation}
\ccsdesc[100]{Software and its engineering~Software defect analysis}
\ccsdesc[100]{Theory of computation~Program analysis}
%
%


\keywords{strong updates, value flow, pointer analysis, flow sensitivity}

\maketitle


\section{INTRODUCTION}

Pointer analysis is one of the most fundamental static program analyses, on which virtually
all others are built. The goal of pointer analysis is 
to compute an approximation of the set of abstract 
objects that 
a pointer can refer to.
A pointer analysis is (1) \emph{flow-sensitive} if it respects control flow
and \emph{flow-insensitive} otherwise and (2)
\emph{context-sensitive} if it distinguishes
different calling contexts 
and \emph{context-insensitive} otherwise.

\emph{Strong updates}, where stores overwrite, i.e., 
kill the previous contents of their abstract destination 
objects with new values, is an important factor
in the precision of pointer 
analysis~\cite{hardekopf2009semi,strongupdate}. 
In the case of \emph{weak updates}, these objects 
are assumed
conservatively to also retain their old contents.
Strong updates are possible only if flow-sensitivity
is maintained. In addition,
a flow-sensitive analysis can strongly update
an abstract object written at a store if and only if 
that object has 
exactly one concrete memory address, known as a 
singleton. 
By applying strong updates where needed, a pointer
analysis can improve
precision,
thereby providing significant benefits to many clients, such as 
change impact analysis~\cite{AchRob11},   
bug detection~\cite{dye14,Yan:2016:AML:2851613.2851773},
security analysis~\cite{Arzt:2014},
type state verification~\cite{fink2008effective}, 
compiler optimization \cite{sui2013query,sui2016loop,sui2014making}, and
symbolic execution~\cite{thresher-pldi13}.

In this paper, we introduce a demand-driven pointer
analysis for C by investigating how to perform 
strong updates effectively in a flow- and context-sensitive framework. For C programs, flow-sensitivity is 
important in achieving the precision required by the
afore-mentioned client applications due to strong
updates performed. If context-sensitivity is also
considered, some more strong updates are possible
for some programs at the expense of more analysis
times.
For object-oriented
languages like Java, context-sensitivity (without
strong updates) is widely used in achieving useful 
precision~\cite{spark,Li14,Ana02,Ana05,smaragdakispick,Sun:2011,xiao2011geometric}.

Ideally, strong updates at stores should be performed by analyzing all
paths independently by solving
a \emph{meet-over-all-paths}
(MOP) problem. 
However, even with branch conditions being
ignored, this problem
is intractable due to potentially unbounded number of 
paths that must be analyzed~\cite{landi1992undecidability,ramalingam1994undecidability}.

Instead, 
traditional flow-sensitive pointer analysis (\FS) for C
\cite{hind1998assessing,kam1977monotone} 
computes the
maximal-fixed-point solution (MFP)
as an over-approximation of MOP by solving an iterative
data-flow problem.  Thus,
the data-flow facts that reach a confluence point along
different paths are merged.  Improving on this,
sparse flow-sensitive pointer analysis (\SFS)~\cite{li2011boosting,hardekopfflow,oh2012design,ye2014region,yu2010level} 
boosts the performance of FS in
analyzing large C programs while maintaining the same strong updates
done by FS. The basic idea is to first conduct a pre-analysis on the
program to over-approximate its def-use chains and then perform FS by
propagating the data-flow facts, i.e., points-to information sparsely
along only the pre-computed def-use chains (aka value-flows) instead
of all program points in the program's control-flow
graph (CFG).

Recently, an 
approach~\cite{strongupdate} for performing
strong updates in C programs is introduced. It
sacrifices the precision of \FS to
gain efficiency by applying strong updates at stores where 
flow-sensitive singleton points-to sets are available but falls back 
to the flow-insensitive points-to information otherwise.
 
By nature, the challenge of pointer analysis is to make judicious
tradeoffs between efficiency and precision. 
Virtually all of the prior analyses for C
that consider some degree of flow-sensitivity  are 
whole-program analyses. Precise ones
are unscalable since they must typically
consider both flow- and context-sensitivity (\FSCS) in order to
maximize the number of strong updates performed. In contrast,
faster ones like \cite{strongupdate} are less precise,
due to both missing strong updates and propagating the points-to
information flow-insensitively across the weakly-updated locations.

In practice, a client application of a pointer
analysis may require only parts of the 
program to be analyzed. In addition, some points-to
queries may
demand precise answers while others can be answered 
as precisely as possible with
small time and memory budgets.
In all these cases,
performing strong updates blindly across the entire program
is cost-ineffective in achieving precision.

\begin{figure}[t]
\centering
\begin{tabular}{l}
\hfil\Large\textbf{\textcolor{blue}{}}\hfil\includegraphics[scale=0.75]{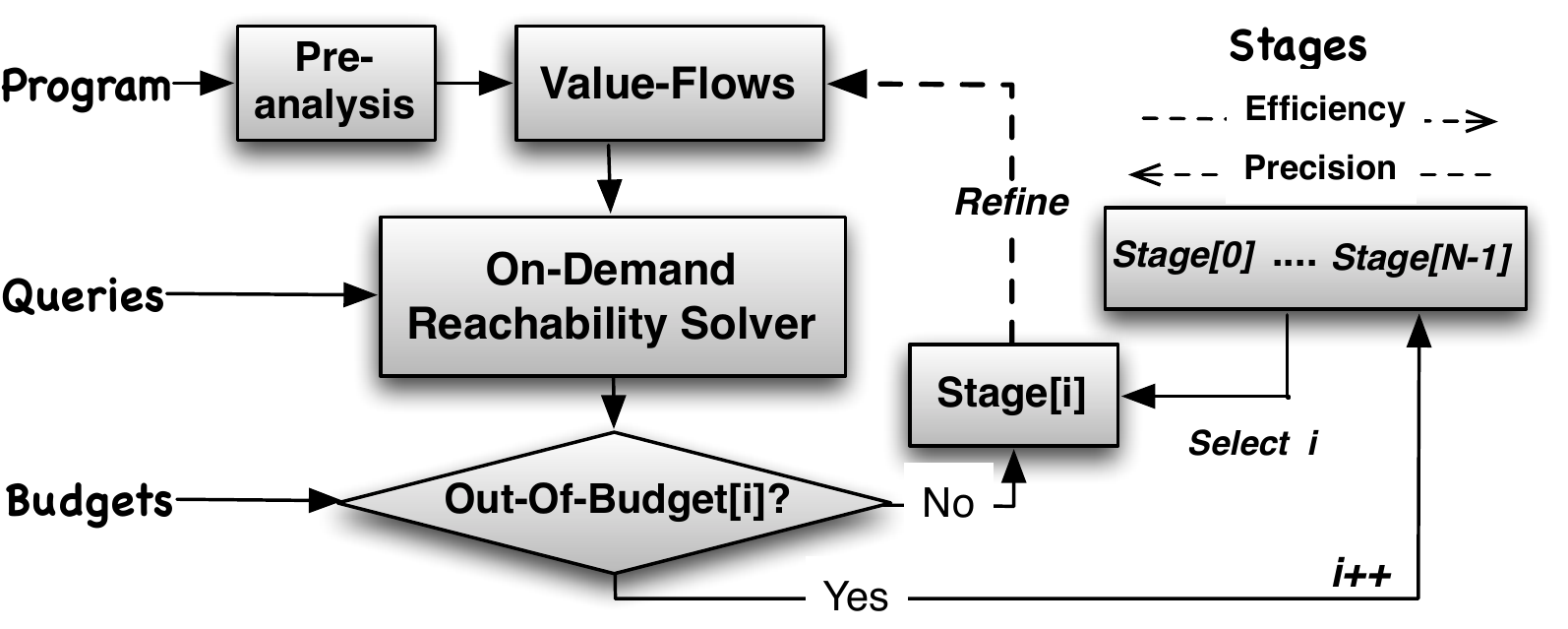}
\end{tabular}
\noindent\rule{8.5cm}{0.4pt}
\caption{Overview of \vfsu 
\label{fig:framework}}
\vspace*{3ex}
\end{figure}

For C programs,
how do we develop precise and efficient pointer analyses 
that are focused and partial, paying closer attention to the parts of 
the programs relevant to on-demand queries?
Demand-driven analyses for C
\cite{Heintze:2001:DPA,Zhang:2014:ESA,Zheng:2008} and  Java 
\cite{Lu13,Shang12,Sridharan:2006,Su15,yan2011demand} are flow-insensitive and thus cannot
perform strong updates to produce the
precision needed by some clients. 
\textsc{Boomerang}~\cite{spath2016boomerang} represents
a recent flow- and
context-sensitive demand-driven pointer analysis for Java.
However, its access-path-based approach 
performs strong updates at a store $a.f=\dots$ only
partially, by updating
$a.f$ strongly and the aliases of $a.f.*$ weakly.
Elsewhere, advances in whole-program flow-sensitive 
analysis for C have exploited some form of sparsity to improve
performance~\cite{hardekopfflow,li2011boosting,oh2012design,ye2014region,yu2010level}.
However, how to replicate this success 
for demand-driven flow-sensitive analysis for  C
is unclear. 
Finally, it remains open as to whether sparse 
strong update analysis can be performed both flow- and
context-sensitively
on-demand to avoid under- or over-analyzing.

In this paper, we introduce \vfsu, the first 
demand-driven pointer analysis with strong updates for
C, designed
to support flexible yet effective tradeoffs between 
efficiency and precision in answering client queries,
in environments with small time and memory budgets such
as IDEs.
As shown in Figure~\ref{fig:framework}, the
novelty behind \supa lies in
performing \textbf{S}trong \textbf{UP}date \textbf{A}nalysis precisely by refining imprecisely
pre-computed value-flows away in a hybrid multi-stage analysis framework.
Given a points-to query, strong updates are performed by 
solving a graph-reachability problem on an inter-procedural 
value-flow graph that captures the def-use chains of the program obtained
conservatively by a pre-analysis. Such over-approximated 
value-flows can be obtained by applying Andersen's analysis
\cite{andersen1994program} (flow- and context-insensitively). \vfsu conducts its
reachability analysis on-demand sparsely along only the pre-computed
value-flows rather than control-flows. In addition, \vfsu filters out 
imprecise value-flows by performing strong updates flow- and context-sensitively where needed
with no loss of precision as long as the total
analysis budget is sufficient.
The precision of \vfsu depends on the degree of
value-flow refinement performed under a budget. The more spurious 
value-flows \vfsu removes, the more precise 
the points-to facts are. 

\vfsu handles large C programs by staging analyses in
increasing efficiency but decreasing precision in a hybrid manner.
Currently, \vfsu proceeds in 
two stages by applying \FSCS and \FS in that order
with a configurable budget for each analysis. 
When failing to answer a query in a stage within its alloted
budget, \vfsu downgrades itself
to a more scalable but less precise analysis in the next stage
and will eventually fall back to the pre-computed
flow-insensitive information.
At each stage, \vfsu will re-answer the query by reusing
the points-to information 
found from processing the current and earlier
queries. By increasing the budgets used in the earlier
stages (e.g., \FSCS), \vfsu will obtain improved precision 
via improved value-flow refinement.

In summary, this paper makes the following contributions:
\begin{itemize}
\item We present the first demand-driven flow- and context-sensitive pointer analysis with strong updates  for C that enables computing precise points-to information by refining away imprecisely precomputed value-flows, subject to analysis budgets.

\item We introduce a hybrid multi-stage analysis framework that 
facilitates efficiency and precision tradeoffs
by staging different analyses in answering client queries.

\item We have produced an implementation
of \vfsu in LLVM (3.5.0) \cite{SUPA}. We evaluate \vfsu with
uninitialized pointer detection as a practical
client by using a total of 18 open-source C programs.
As the analysis budget increases,
\vfsu achieves improved precision, with its
single-stage flow-sensitive analysis reaching 97.4\%
of that achieved by whole-program flow-sensitive
analysis, by consuming about 0.18 seconds
and 65KB of memory per query, on average (with 
a per-query budget of at most 10000 value-flow edges
traversed).
With context-sensitivity also being considered, 
more strong updates are also possible. \vfsu's
two-stage analysis then becomes more precise for some
programs at the expense of more analysis times. 
\item We present four case studies to 
demonstrate that \vfsu is effective
in checking whether variables are initialized or not
in real-world applications.
\item We show that \vfsu is amenable to parallelization.
To demonstrate this, we have developed a parallel
implementation of \vfsu's single-stage flow-sensitive
analysis based on Intel Threading Building Blocks 
(TBB),  achieving a speedup of 
up to 6.9x with an average of 3.05x a 8-core machine over
its sequential version.

\end{itemize}

The rest of this paper is organized as follows.
Section~\ref{sec:bk} provides the background information.
Section~\ref{sec:mot} presents a motivating example. 
Section~\ref{sec:intra} introduces our formalism
for \vfsu.
Section~\ref{sec:eval} discusses and analyzes our
experimental results.
Section~\ref{sec:cases} contains four case studies.
Section~\ref{sec:para} describes a parallel
implementation of \vfsu.
Section~\ref{sec:rela} describes the related work.
Finally,
Section~\ref{sec:conc} concludes the paper.


\section{Background}
\label{sec:bk}

We describe how to represent a C program by an 
interprocedural
sparse
value-flow graph to enable demand-driven pointer analysis
via value-flow refinement. 
Section~\ref{sec:llvm-ir}
introduces the part of LLVM-IR relevant to pointer analysis.
Section~\ref{sec:top-ssa} describes how to put top-level
variables in SSA form.
Section~\ref{sec:addr-ssa} describes how to put address-taken
variables in SSA form.
Section~\ref{sec:svfg} constructs a
sparse value-flow graph that represents the
def-use chains for both top-level and address-taken variables
in the program.

\subsection{LLVM-IR}
\label{sec:llvm-ir}

We perform pointer analysis in the LLVM-IR of a program, as 
in~\cite{hardekopfflow,strongupdate,li2011boosting,Sui2012,ye2014region,George2016}.  
The domains and the LLVM instructions relevant to pointer
analysis are given in Table~\ref{tab:domain}.
The set of all variables $\calV$ are separated into 
two subsets, $\calO$ that contains all possible abstract objects, i.e., 
\emph{address-taken variables} of a pointer and 
$\calP$ that contains all \emph{top-level variables}. 

In LLVM-IR, top-level variables in $\calP = \calS \cup \calG$, including stack virtual registers (symbols starting with \textsf{"\%"}) and global variables (symbols starting with \textsf{"@"})
are explicit, i.e., directly accessed.
Address-taken variables in $\calO$ are implicit, i.e.,
accessed indirectly 
at LLVM's \textsf{load} or \textsf{store} instructions
via top-level variables.

Only a subset of the complete LLVM instruction set that is relevant to pointer analysis are modeled. As in Table~\ref{tab:domain}, every function $f$ of a program contains nine types of instructions (statements), including seven types of instructions used in the function body of $f$, and one \textsc{FunEntry} instruction $f(r_1,\dots,r_n)$ with the declarations of the parameters of $f$, and one \textsc{FunExit} instruction $ret_f \ p$ as the unique return of $f$. Note that the
LLVM pass \textsf{UnifyFunctionExitNodes} is executed before pointer analysis in order 
to ensure that every function has only one \textsc{FunExit} instruction.

\begin{table}[h]
\centering
\caption{Domains and LLVM instructions used by 
pointer analysis.
\label{tab:domain}
}
\addtolength{\tabcolsep}{-.3ex}
\renewcommand{\arraystretch}{1.2}
\begin{tabular}{@{}l|l@{}}
	\hline
\multicolumn{1}{c|}{Analysis Domains} & \multicolumn{1}{c}{LLVM Instruction Set} \\
	\hline
	{
\renewcommand{\arraystretch}{1.31}
\begin{tabular}[t]{l@{\hspace{1mm}}l@{\hspace{1mm}}ll}
$\lab$&$\in$ & $ \calL$ & instruction labels \\
$fld$&$\in$ & $ \calC$ & constants (field accesses) \\
$s$&$\in$ & $\calS$ & stack virtual registers \\
$g$&$\in$ & $ \calG$ & global variables \\
$f$&$\in$ & $ \calF \subseteq \calG$ & program functions \\
$p,q,r,x,y$&$\in$ & $ \calP = \calS \cup \calG$ & top-level variables\\
$o,a,b,c,d$&$\in$ & $ \calO$ &address-taken variables \\ 
$v$&$\in$ & $ \calV = \calP \cup \calO$ & program variables \\
\end{tabular}
}
&

\begin{tabular}[t]{@{}ll@{}l@{}l}
\textsc{AddrOf} & $p$ &\ =\ & $\ \&o$ \\
\textsc{Copy} & $p$ &\ =\ & $\ q$ \\
\textsc{Phi} & $p$ &\ =\ & $\phi(q,r)$ \\
\textsc{Field} & $p$ &\ =\ & $\ \&q\!\rightarrow\!fld$ \\
\textsc{Load} & $p$ &\ =\ & $\ *q$\\
\textsc{Store} &  $*p$ &\ =\ & $\ q$ \\
\textsc{Call} &$p$ &\ =\ & $\ q(r_1,\dots,r_n)$ \\
\textsc{FunEntry} &\multicolumn{ 3}{l}{$f(r_1,\dots,r_n)$}  \\
\textsc{FunExit} &\multicolumn{ 3}{l}{$ret_f \ p$}  \ \\
\end{tabular}\\
\hline
\end{tabular}
\end{table}

Let us go through the seven types of instructions used inside
a function.
For an \textsc{AddrOf} instruction $p\!\!=\!\!\& o$, 
known as an \emph{allocation site}, $o$ is 
one of the following objects: 
(1) a stack object, $o_\lab$, where $\lab$ is its allocation site (via an LLVM 
\textsf{alloca} instruction), (2) a global object, i.e.,
a global object $o_\lab$, where $\lab$ is its allocation
site or a program function $o_f$, where $f$ is its name, and
(3) a dynamically created heap object $o^h_\lab$,
where $\lab$ is its heap allocation site (e.g., via a
\textsf{malloc() call}). 
For each object $o$ (except for a function), we 
write $o_{fld}$ to represent the sub-object that 
corresponds to its field $fld$.
For flow-sensitive pointer analysis, the initializations 
for global objects take place at the entry of \textsf{main()}. 


\textsc{Copy} denotes a casting instruction (e.g., \textsf{bitcast}) in LLVM. \textsc{Phi} is a standard SSA instruction introduced at a confluence point in the CFG to select the value of a variable from different control-flow branches.
\textsc{Load} (\textsc{Store}) is a memory accessing 
instruction that reads (write) a value from (into)
an address-taken object.

Our handling of field-sensitivity is ANSI-compliant. Given a pointer to an aggregate (e.g., a struct or an array), 
pointer arithmetic used for accessing
anything other than the aggregate itself has undefined behavior~\cite{pearce2007efficient,iso90} and thus not modeled.
To model the field accesses of a struct object, \textsc{Field} represents a \textsf{getelementptr} instruction with its field offset $fld$ as a constant value.  A 
\textsf{getelementptr} instruction that operates on
a non-constant field of
a struct is modeled as \textsc{Copy} instructions, one
for every field of the struct conservatively. Arrays are treated monolithically.

\textsc{Call}, $p=q(r_1,\dots,r_n)$, denotes a \textsf{call} 
instruction, where $q$ can be either a global variable (for a direct call) or a stack virtual register (for
an indirect call). 

\subsection{SSA Form for Top-Level Variables}
\label{sec:top-ssa}

LLVM-IR is known as a partial SSA form since
only top-level variables are in SSA form. In LLVM-IR,
top-level variables are explicit, i.e., directly accessed 
and can thus be put in SSA form by using a standard SSA
construction algorithm~\cite{cytron1991efficiently}
(with \textsc{Phi} instructions inserted at confluence points).


\begin{figure}[th]
\centering
\begin{tabular}{c}
\hfil\Large\textbf{\textcolor{blue}{}}\hfil\includegraphics[scale=0.72]{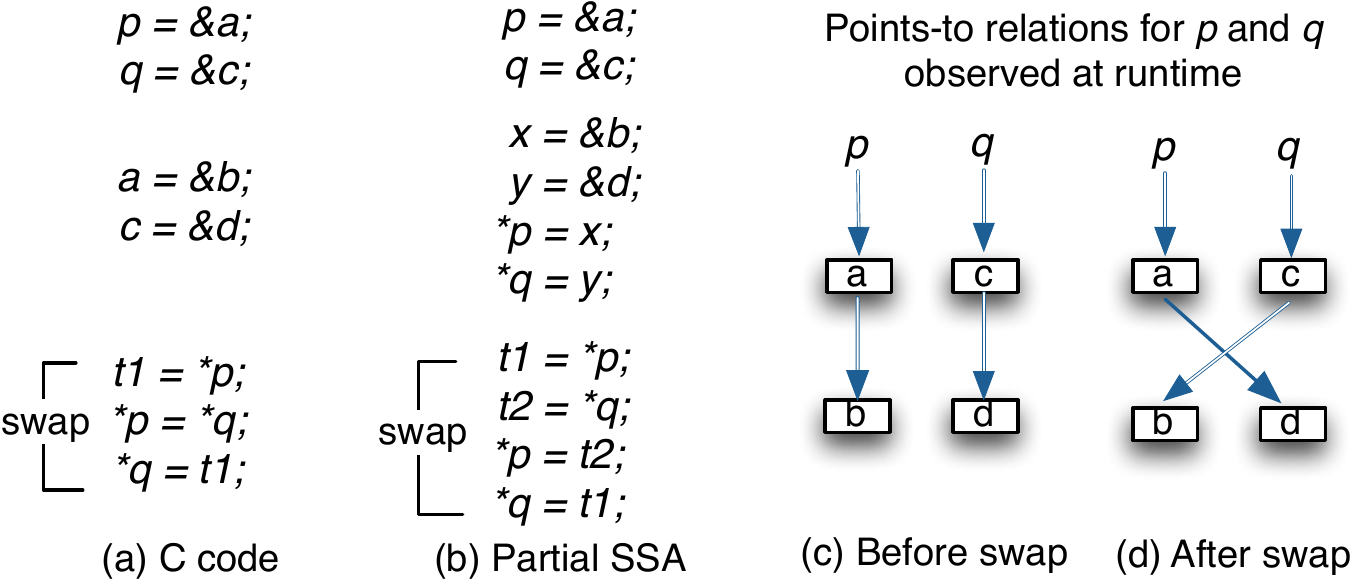}
\\ \hline
\end{tabular} 
\caption{A \textsf{swap} example and its partial SSA form.
	\label{fig:bkgex} }
\end{figure}

Let us illustrate LLVM's partial SSA form by using an example
in Figure~\ref{fig:bkgex}.
Figure~\ref{fig:bkgex}(a) shows a \textsf{swap} program in C
and Figure~\ref{fig:bkgex}(b) gives
its corresponding partial SSA form. 
Figures~\ref{fig:bkgex}(c) and (d) depict some (runtime)
points-to relations before and after the swap operation.
In this example, we have
$p,q,x,y,t1,t2 \in \calP$ and $a,b,c,d \in \calO$. Note that 
$x,y,t1$ and $t2$ are new temporary registers introduced in
order to put the program given
in Figure~\ref{fig:bkgex}(a) into the partial SSA form
given in Figure~\ref{fig:bkgex}(b). 
In particular,
$*p=*q$ is decomposed into 
$t2=*q$ and $*p=t2$, where 
$t2$ is a top-level pointer.

In LLVM-IR, all top-level variables are in SSA form.
In this example, all top-level variables are trivially in
SSA form, as each has exactly one definition only.
As a result, the def-use chains for top-level variables
are readily available.

However, address-taken variables are accessed
indirectly at loads and stores via top-level variables
and thus  not in SSA form. For example, the address-taken
variable
$a$ is defined implicitly twice, once at $*p = x$ and once
at $*p=t2$, and the address-taken variable
$c$ is also defined implicitly twice, once at $*q = y$ and once
at $*q=t1$. As a result, the def-use chains for 
address-taken variables are not immediately available.

\subsection{SSA Form for Address-Taken Variables}
\label{sec:addr-ssa}

Starting with LLVM's partial SSA form, we first perform a
pre-analysis by using 
Andersen's algorithm flow- and context-insensitively~\cite{andersen1994program}, 
implemented in SVF~\cite{SVF}. We then put address-taken
variables in \emph{memory SSA form}, by using the SSA
construction algorithm~\cite{cytron1991efficiently}. Imprecise points-to
information computed this way will be refined by
our demand-driven pointer analysis.

Given a variable $v$, $\apts(v)$ represents its points-to set
computed by Andersen's algorithm.
There are two steps ~\cite{sui2014detecting}, illustrated
in Figures~\ref{fig:bkg-intra}(a) and (b) intraprocedurally
and in Figures~\ref{fig:bkg-inter}(a) and (b) interprocedurally.

\begin{description}
\item [Step 1: Computing Modification and Reference Side-Effects] 
As shown in Figure~\ref{fig:bkg-intra}(a),
every load, e.g., $t1=*q$ 
is annotated with a $\muop{a}$ operator for each object $a$ 
pointed by $q$, i.e., $a\in\apts(q)$ to represent a potential use of $a$ at the load. 
Similarly,
every store, e.g., $*p=x$ is annotated with a 
$\chiop{a}{a}$ operator for each object $a\in\apts(p)$ 
to represent a potential def and use of $a$ at the store. If $a$ can be \emph{strongly updated}, then $a$ receives whatever $x$ points to and the old contents in $a$ are killed. Otherwise, $a$ must also incorporate its old contents, resulting in a 
\emph{weak update} to $a$. 

\begin{figure}[ht]
\centering
\vspace*{-1ex}
\begin{tabular}{c}
\hfil\Large\textbf{\textcolor{blue}{}}\hfil\includegraphics[scale=0.70]{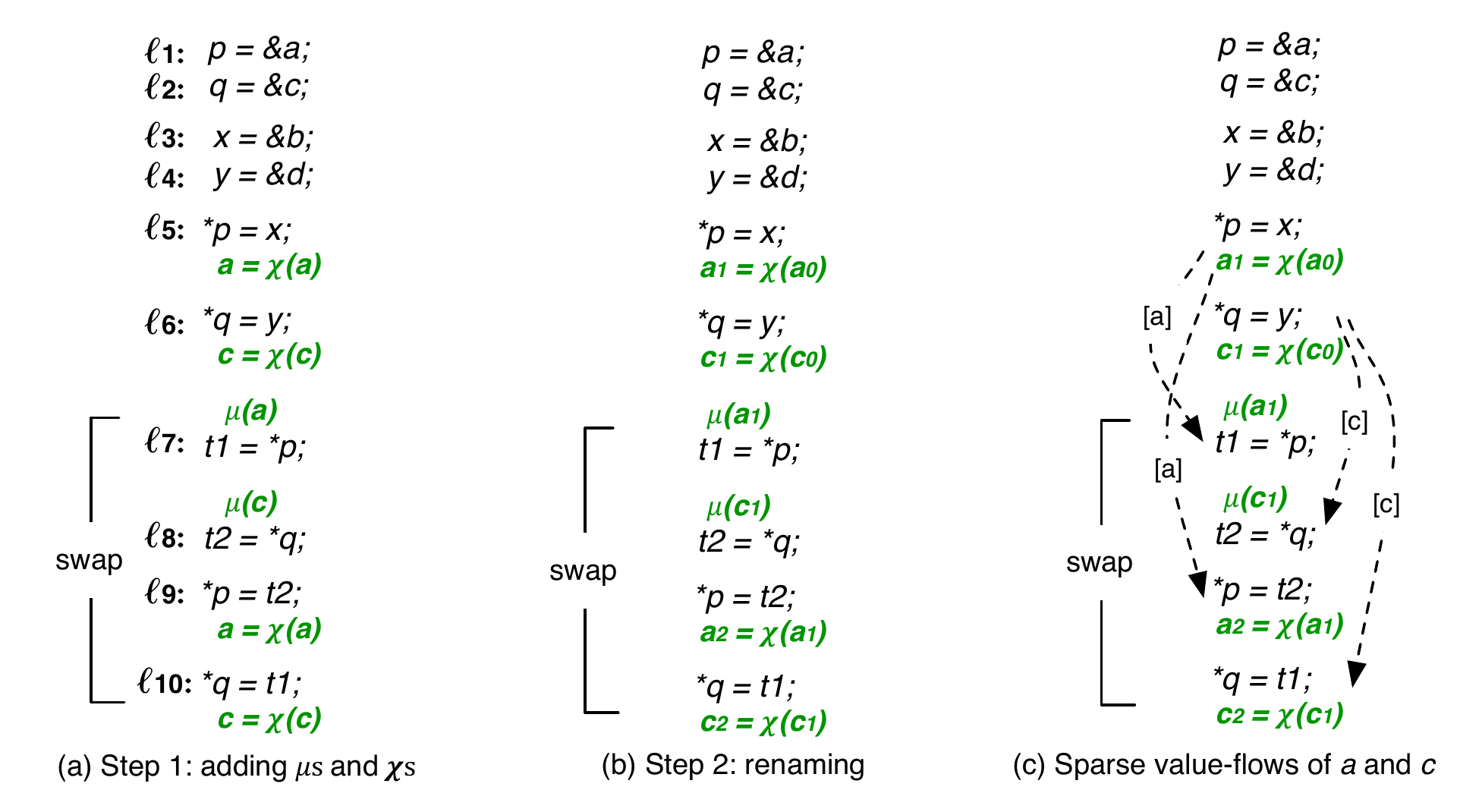}
\\[-2ex] \hline
\end{tabular} 
\caption{Memory SSA form 
and sparse value-flows constructed intraprocedurally for
Figure~\ref{fig:bkgex},
obtained with Andersen's analysis: $\apts(p)= \{a\}$ and $\apts(q) = \{c\}$.
	\label{fig:bkg-intra} }
\end{figure}

\begin{figure}[h]
\hspace*{0ex}
\begin{tabular}{@{\hspace{-3mm}}l@{\hspace{-5mm}}}
\hfil\Large\textbf{\textcolor{blue}{}}\hfil\includegraphics[scale=0.67]{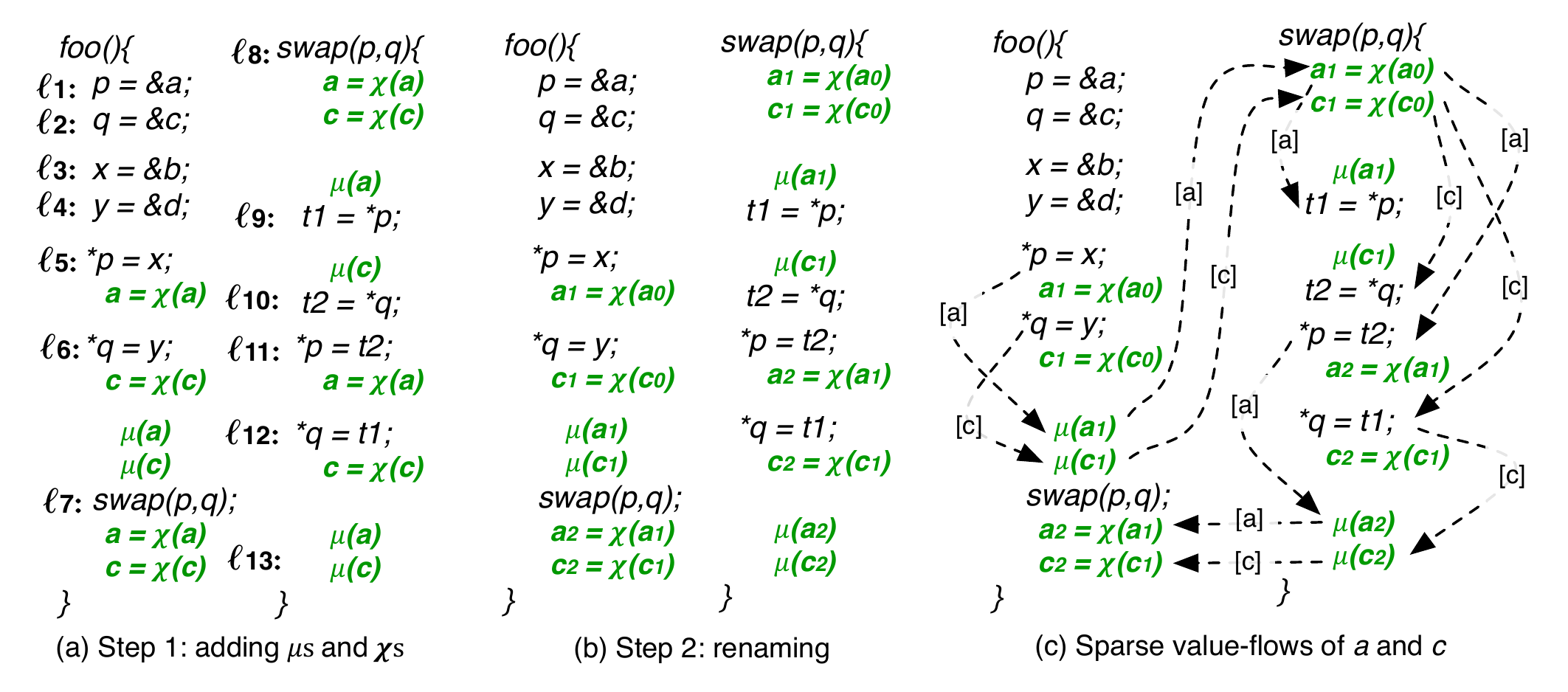}
\\[-2ex] \hline
\end{tabular} 
\caption{Memory SSA form and sparse value-flows constructed
	interprocedurally for an example 
modified from Figure~\ref{fig:bkgex} with its
four swap instructions moved into a separate function,
called \textsf{swap}. $\lab_8$ and $\lab_{13}$ correspond to the \textsc{FunEntry} and \textsc{FunExit} of \textsf{swap}.
	\label{fig:bkg-inter} }
	\vspace*{-3mm}
\end{figure}

We compute the side-effects of a function call by applying
a lightweight interprocedural mod-ref analysis~\cite
[\textcolor{blue}{\S 4.2.1}]{sui2014detecting}.
For a given callsite $\lab$, it is annotated with 
$\muop{a}$ ($\chiop{a}{a}$) 
if $a$ may be read (modified) inside the callees of $\lab$ (discovered by Andersen's pointer analysis). In addition, 
appropriate $\mathchi$ and $\mathmu$ operators are also added 
for the
\textsc{FunEntry} and \textsc{FunExit} instructions of these
callees in order to mimic passing parameters 
and returning results for address-taken variables. 

Figure~\ref{fig:bkg-inter}(a) gives an example modified
from Figure~\ref{fig:bkg-intra}(a) by moving
the four swap instructions into a function, 
\textsf{swap}. For read side-effects, $\muop{a}$ and $\muop{c}$ are added 
before callsite $\lab_7$ to represent the potential uses of $a$ and $c$ in \textsf{swap}. Correspondingly, \textsf{swap}'s \textsc{FunEntry} instruction $\lab_8$ is annotated with $\chiop{a}{a}$ and $\chiop{c}{c}$ to receive the values of $a$ and $c$ 
passed from $\lab_7$. For modification side-effects, $\chiop{a}{a}$ and $\chiop{c}{c}$ are added after $\lab_7$ to 
receive the potentially modified values of $a$ and $c$ returned
from \textsf{swap}'s \textsc{FunExit} instruction $\lab_{13}$,
which are annotated with $\muop{a}$ and $\muop{c}$.


\item [Step 2: Memory SSA Renaming] 
All the address-taken variables are converted into SSA form
as suggested in~\cite{chow1996effective}. 
Every $\muop{a}$ is treated as a use of $a$.
Every $\chiop{a}{a}$ is treated as both a def and use 
of $a$, as $a$ may admit only a weak update. 
Then the SSA form for address-taken variables is obtained
by applying a standard SSA 
construction algorithm~\cite{cytron1991efficiently}.

For the program annotated with $\mu$'s and $\chi$'s in
Figure~\ref{fig:bkg-intra}(a), 
Figure~\ref{fig:bkg-intra}(b) gives its memory SSA form.
Similarly, Figure~\ref{fig:bkg-inter}(b) gives the
memory SSA form for 
Figure~\ref{fig:bkg-inter}(a).

\end{description}

\begin{table}[t]
	\centering
\caption{Def-use information of both top-level and address-taken variables. $Def_{v}$ ($Use_{v}$) denotes the set of definition (use) instructions for a variable $v\in\calV$. 
\label{tab:def-use}}
\begin{tabular}{@{}l@{\hspace{0mm}}|| l}
\hline
\multicolumn{1}{c}{Instruction} \ $\lab$ & \multicolumn{1}{c}{Defs and Uses of Variables in Memory 
SSA Form}\\ \hline
$p = \&o$   & $\{\lab\} =  Def_{p}$ \\
$p = q$   & $\{\lab\} = Def_{p}$ \quad $\lab\in Use_{q}$ \\
$p = \phi(q,r)$   & $\{\lab\} =   Def_{p}$ \quad $\lab\in Use_{q}$ \quad $\lab\in Use_{r}$ \\
$p =\&q\!\rightarrow\!fld$   & $\{\lab\} =   Def_{p}$ \quad $\lab\in Use_{q}$ \\
 \hline
$p = *q$ \hspace{9.5mm}\quad $\muop{a_i}$ & $\{\lab\} =   Def_{p}$ \quad $\lab\in Use_{q}$ \quad $\lab\in Use_{a_i}$ \\
$*p = q$  \hspace{8.5mm} \quad $\chiop{a_{i+1}}{a_i}$  & $\lab\in Use_{p}$ \quad $\lab\in Use_{q}$ \quad $\lab\in Def_{a_{i+1}}$ \quad $\lab\in Use_{a_i}$ \\
\hline
$p = q(r_1,\dots,r_n)$   & $\{\lab\} =   Def_{p}$ \quad $\lab\in Use_{q}$ \quad $\forall\ i\in1,\dots,n:\!\ \!\lab\in Use_{r_i}$\\
 \quad \ $\muop{a_i}$ \quad $\chiop{a_{j+1}}{a_j}$ & $\lab\in Use_{a_i}$ \quad $\lab\in Def_{a_{j+1}}$ \quad $\lab\in Use_{a_j}$ \\
 \hline
 $f(r_1,\dots,r_n) $  \hspace{2mm} $\chiop{a_{i+1}}{a_i}$ & $\forall\ i\in1,\dots,n:\!\ \!\lab\in Def_{r_i}$ \ \ $\lab\in Def_{a_{i+1}}$ \ \ $\lab\in Use_{a_i}$ \\
 $ret_f \ p$ \hspace{8.5mm} \quad $\muop{a_i}$ &  $\lab\in Use_{p}$ \quad $\lab\in Use_{a_i}$ \\ \hline
\end{tabular}
\end{table}

\begin{figure}[h]
	\centering
\hspace*{-2.5ex}
\begin{tabular}{l}
\begin{tabular}{ll}
\\
\ruledef{\rulename{Intra-Top}\hspace{-4mm}}{
\begin{tabular}{lll}
$\lab \in Def_p$ \quad $\lab' \in Use_p$   \\
\end{tabular}
}{
\begin{tabular}{ll}
$\lab\vfedge{p}\lab'$\\
\end{tabular}
}
\end{tabular}

\quad

\begin{tabular}{ll}
\\
\ruledef{\rulename{Intra-Addr}\hspace{-4mm}}{
\begin{tabular}{lll}
$\lab \in Def_{a_i}$ \quad $\lab' \in Use_{a_i}$   \\
\end{tabular}
}{
\begin{tabular}{ll}
$\lab\vfedge{a}\lab'$\\
\end{tabular}
}
\end{tabular}

\\\\[-2ex]

\begin{tabular}{ll}
\ruledef{\rulename{Inter-call-Top}\hspace{-4mm}}{
\begin{tabular}{lll}
$\lab: p = q(r_1,\dots,r_n)$ \quad $o_f\in\apts(q)$ \quad $\lab': f(r'_1,\dots,r'_n)$   \\
\end{tabular}
}{
\begin{tabular}{ll}
$\forall i\in 1,\dots,n: \lab\vfedge{r_i}\lab'$\\
\end{tabular}
}
\end{tabular}

\\ \\[-2ex]

\begin{tabular}{ll}
\ruledef{\rulename{Inter-ret-Top}\hspace{-4mm}}{
\begin{tabular}{lll}
$\lab: p = q(\dots)$  \quad $a_f\in\apts(q)$ \quad $\lab': ret_f\ p'$   \\
\end{tabular}
}{
\begin{tabular}{ll}
$\lab'\vfedge{p}\lab$\\
\end{tabular}
}
\end{tabular}

\\\\[-2ex]

\begin{tabular}{ll}
\ruledef{\rulename{Inter-call-Addr}\hspace{-4mm}}{
\begin{tabular}{lll}
$\lab: p = q(\dots)$ $\textcolor{blue}{\muop{a_i}}$ \quad $a_f\in\apts(q)$ \quad $\lab': f(\dots)$ $\textcolor{blue}{\chiop{a_{j+1}}{a_j}}$   \\
\end{tabular}
}{
\begin{tabular}{ll}
$\lab\vfedge{a}\lab'$\\
\end{tabular}
}
\end{tabular}

\\ \\[-2ex]

\begin{tabular}{ll}
\ruledef{\rulename{Inter-ret-Addr}\hspace{-4mm}}{
\begin{tabular}{lll}
$\lab: \_ = q(\dots)$ $\chiop{a_{j+1}}{a_j}$  \quad $a_f\in\apts(q)$ \quad  $\lab': ret_f \ \_ \quad \textcolor{blue}{\muop{a_i}}$  \\
\end{tabular}
}{
\begin{tabular}{ll}
$\lab'\vfedge{a}\lab$\\
\end{tabular}
}
\end{tabular}
\end{tabular}
\caption{Value-flow construction in Memory SSA form. \label{fig:pre-vf-rules}
}
\end{figure}

\subsection{Sparse Value-Flow Graph} 
\label{sec:svfg}

Once both top-level and address-taken variables are in SSA
form, their def-use chains are immediately available,
as shown in Table~\ref{tab:def-use}.
We discussed top-level variables earlier.
For the two address-taken variables $a$ and $c$ 
in Figure~\ref{fig:bkgex},
Figure~\ref{fig:bkg-intra}(c) depicts their def-use chains,
i.e., sparse value-flows for the memory SSA form in
Figure~\ref{fig:bkg-intra}(b).
Similarly, Figure~\ref{fig:bkg-inter}(c) gives their
sparse value-flows for the
memory SSA form in
Figure~\ref{fig:bkg-inter}(b).

Given a program, a \emph{sparse value-flow graph} (SVFG),
$G_{\subvfg} = (N,E)$, is a multi-edged directed graph that 
captures its def-use chains for both top-level and
address-taken variables. $N$ is the set of nodes representing all instructions
and $E$ is the set of edges representing all potential def-use chains.
In particular, an edge $\lab_1 \vfedge{v}\lab_2$, where
$v\in\calV$, from statement $\lab_1$ to statement $\lab_2$
signifies a potential def-use chain for $v$ with its def
at $\lab_1$
and use at $\lab_2$. We refer to 
$\lab_1 \vfedge{v}\lab_2$ a \emph{direct value-flow} 
if $v\in\calP$ and an \emph{indirect value-flow} if $v\in\calO$.
This representation is \emph{sparse} since
the intermediate program points 
between $\ell_1$ and $\ell_2$ are omitted, thereby enabling
the underlying points-to information to be gradually refined by
applying
a sparse demand-driven pointer analysis.

Figure~\ref{fig:pre-vf-rules} gives the rules for connecting 
value-flows between two instructions based on the defs and uses
computed in Table~\ref{tab:def-use}. For
intraprocedural value-flows, 
\rulename{Intra-Top} and \rulename{Intra-Addr} handle
top-level and address-taken variables, respectively. In
SSA form, every use of a variable only has a unique definition.
For a use of $a$ identified as $a_i$ (with its $i$-th version) at $\lab'$ annotated with $\muop{a_i}$, its unique definition in SSA form is $a_i$ at an $\lab$ annotated with
$\chiop{a_{i}}{a_{i-1}}$.
Then, $\lab \vfedge{a}\lab'$ is generated to represent potentially the value-flow of $a$ from $\lab$ to $\lab'$. 
Thus,
the PHI functions introduced for address-taken variables will 
be ignored, as the value $a$ in $\lab \vfedge{a}\lab'$ is not versioned.


Let us consider interprocedural value-flows.
The def-use information in Table~\ref{tab:def-use} is only 
intraprocedural. According to Figure~\ref{fig:pre-vf-rules},
interprocedural value-flows
are constructed to represent parameter passing for
top-level variables (\rulename{Inter-call-Top} and
\rulename{Inter-ret-Top}), and 
the $\mathmu/\mathchi$ operators annotated 
at \textsc{FunEntry}, \textsc{FunExit} and \textsc{Call} 
for address-taken variables
(\rulename{Inter-call-addr} and \rulename{Inter-ret-addr}).

\rulename{Inter-call-Top} connects the value-flow 
from an actual argument $r_i$ at a
call instruction $\lab$ to its corresponding formal parameter
$r'_i$ at the \textsc{FunEntry} $\lab'$ of every callee $f$ invoked
at the call. Conversely, \rulename{Inter-ret-Top} models the 
value-flow from the \textsc{FunExit} instruction of $f$ to
every callsite where $f$ is invoked.
Just like for top-level variables, \rulename{Inter-call-addr} and \rulename{Inter-ret-addr} build the value-flows of 
address-taken variables across the functions according to the annotated $\mathmu$'s and $\mathchi$'s. 
Note that the versions $i$ and $j$ of an SSA variable $a$ 
in different functions may be different. For example, 
Figure~\ref{fig:bkg-inter}(c) illustrates the four inter-procedural value-flows $\lab_7 \vfedge{a} \lab_8$, $\lab_7 \vfedge{c} \lab_8$, $\lab_{13} \vfedge{a} \lab_7$ and $\lab_{13} \vfedge{c} \lab_7$ obtained by applying the two rules to
Figure~\ref{fig:bkg-inter}(b).

The SVFG obtained this way may contain spurious 
def-use chains, such as 
$\lab_5\vfedge{a}\lab_9$ in Figure~\ref{fig:bkg-intra},
as Andersen's flow- and context-insensitive pointer
analysis is fast but imprecise.
However, this representation allows imprecise points-to
information to be refined by performing sparse 
whole-program flow-sensitive pointer analysis
as in prior work~\cite{hardekopfflow,Nagaraj:2013,ye2014region,sui2016sparse}. In this paper,
we introduce a demand-driven flow- and context-sensitive
pointer analysis with strong updates that can
answer points-to queries efficiently and precisely on-demand, by
removing spurious def-use chains in the SVFG iteratively.


\begin{figure*}[th]
\centering
\hspace*{-3ex}
\begin{tabular}{l@{\hspace{10mm}}l}
\begin{tabular}{l}
\hfil\Large\textbf{\textcolor{blue}{}}\hfil\includegraphics[scale=0.4]{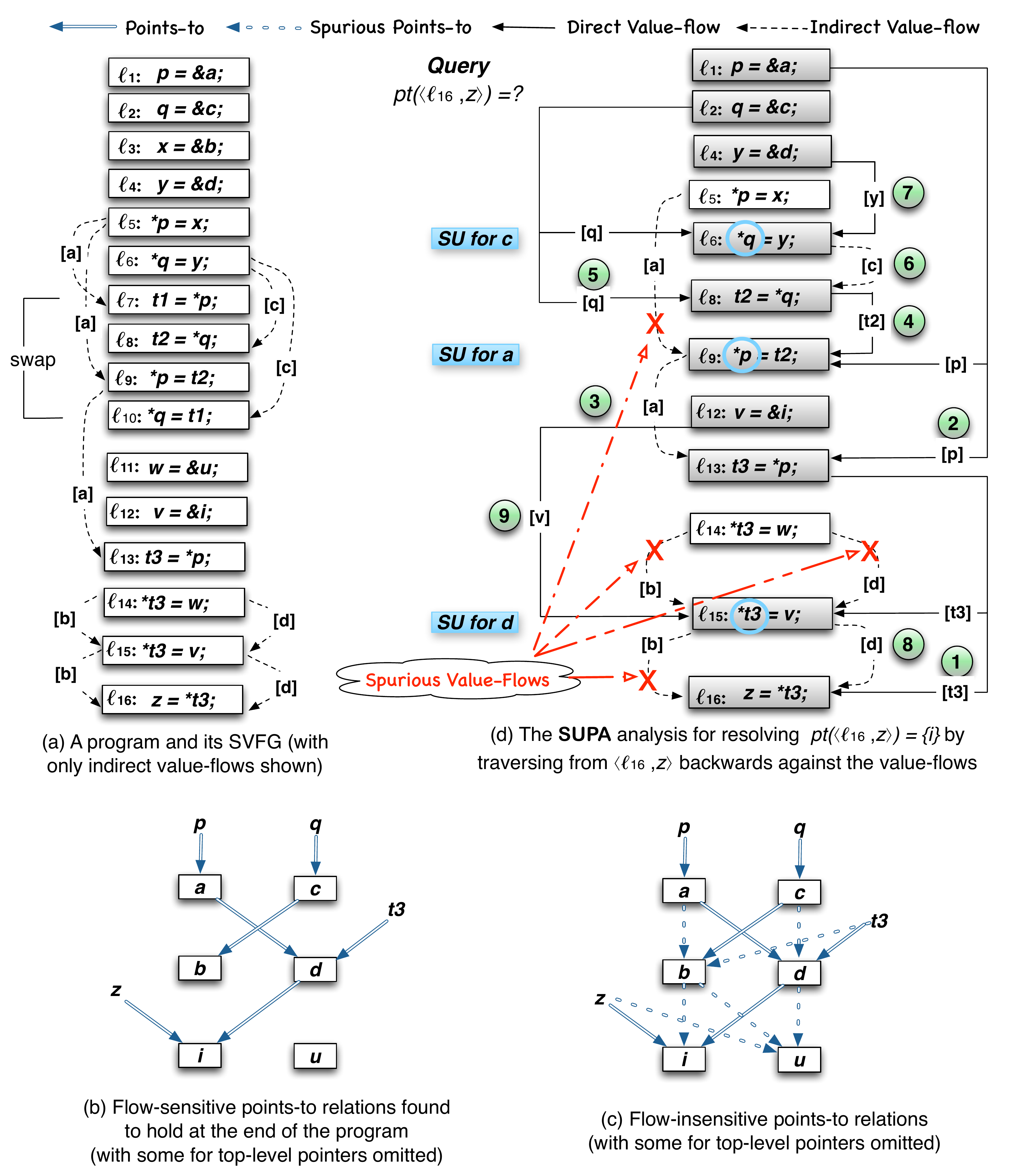}
\end{tabular}
\end{tabular}
\caption{A motivating example for illustrating \vfsu (SU 
	stands for ``Strong Update'').
\label{fig:motex}}
\end{figure*}

\section{A MOTIVATING EXAMPLE}
\label{sec:mot}

Our demand-driven pointer analysis, \vfsu, operates on the
SVFG of a program. It computes points-to
queries flow- and context-sensitively on-demand by performing
strong updates, whenever possible,
to refine away imprecise value-flows in the SVFG.  

Our example program, shown in Figure~\ref{fig:motex}(a),
is simple (even with 16 lines).  The program
consists of a straight-line sequence of code, with $\ell_1$ --
$\ell_{10}$ taken directly from Figure~\ref{fig:bkgex}(b) and the
six new statements
$\ell_{11}$ -- $\ell_{16}$ added to enable us to highlight
some key properties of \vfsu. 
We assume that $u$ at $\lab_{11}$ is uninitialized but $i$ at $\lab_{12}$ 
is initialized. The SVFG embedded 
in Figure~\ref{fig:motex}(a) will be referred to 
shortly below. We describe how \vfsu can be used to 
prove that
$z$  at $\lab_{16}$ points only to the initialized 
object $i$, by computing flow-sensitively on-demand the points-to query
$\pts(\lvar{\lab_{16},z})$, i.e.,
the points-to set
of $z$ at the program point after $\ell_{16}$, which is
defined in (\ref{eq:ptr}) in Section~\ref{sec:intra}. 

Figure~\ref{fig:motex}(b) depicts the points-to relations for the six
address-taken variables and some top-level ones found
at the \emph{end} of the code sequence
by a whole-program flow-sensitive analysis (with strong updates) like \SFS \cite{hardekopfflow}.
Due to flow-sensitivity, multiple solutions for a 
pointer are maintained.
In this example, these are the true relations observed
at the end of program execution. Note that \SFS 
gives rise to 
Figure~\ref{fig:bkgex}(c) by analyzing
$\ell_1$ -- $\ell_6$, Figure~\ref{fig:bkgex}(d) by 
analyzing also $\ell_7$ -- $\ell_{10}$, and finally,
Figure~\ref{fig:motex}(b) by analyzing 
$\ell_{11}$ -- $\ell_{16}$ further.
As $z$ points to $i$ but not $u$, no warning is
issued for $z$, implying that $z$ is regarded as being
properly initialized.

Figure~\ref{fig:motex}(c) shows how the points-to
relations in Figure~\ref{fig:motex}(b) are
over-approximated flow-insensitively by applying Andersen's analysis
\cite{andersen1994program}.
In this case, a single solution is computed
conservatively for the entire program.
Due to the lack of strong updates in analyzing the two stores
performed
by \textsf{swap}, the points-to relations in
Figures~\ref{fig:bkgex}(c) and  \ref{fig:bkgex}(d) are 
merged, causing
$*a$ and $*c$ to become spurious aliases. 
When $\ell_{11}$ -- $\ell_{16}$ are analyzed, the seven 
spurious points-to relations (shown in dashed
arrows
in Figure~\ref{fig:motex}(c)) are introduced. 
Since $z$ points to $i$ (correctly) and $u$
(spuriously), a false alarm for $z$ will be issued.
Failing to consider flow-sensitivity, Andersen's analysis
is not precise for this uninitialization pointer
detection client.

Let us now explain how \vfsu, shown 
in Figure~\ref{fig:framework},
works. \vfsu will first perform a pre-analysis to the example program to build the SVFG given
in Figure~\ref{fig:motex}(a), as discussed in
Section~\ref{sec:bk}. For its top-level variables,
their direct value-flows, i.e., def-use chains 
are explicit and thus omitted to avoid cluttering. 
For example, $q$  has three def-use chains
$\lab_2 \vfedge{q}\lab_6$,
$\lab_2 \vfedge{q}\lab_8$ and
$\lab_2 \vfedge{q}\lab_{10}$.
For its address-taken variables, there are
nine indirect value-flows,
i.e., def-use chains depicted
in Figure~\ref{fig:motex}(a).
Let us see how the two def-use chains for $b$ are created.
As $t3$ points to $b$,
$\ell_{14}$,  $\ell_{15}$ and  $\ell_{16}$ will be annotated with
$b=\chi(b)$, $b=\chi(b)$ and $\mu(b)$, respectively. By putting $b$
in SSA form, these three functions become
$b2=\chi(b1)$, $b3=\chi(b2)$ and $\mu(b3)$. Hence, we have
$\lab_{14}\! \vfedge{b}\!\lab_{15}$ and
$\lab_{15}\! \vfedge{b}\!\lab_{16}$, indicating $b$ at $\ell_{16}$
has two potential definitions, with the 
one at $\ell_{15}$ overwriting the one at $\ell_{14}$. 
The def-use chains for $d$ and $a$ are built similarly.

Let us consider a single-stage analysis
with $\textbf{\textit{Stage[N-1]}}=
\textbf{\textit{Stage[0]}}\!=\!FS$
in Figure~\ref{fig:framework}.
Figure~\ref{fig:motex}(d) shows how
\vfsu computes $\pts(\lvar{\ell_{16},z})$ on-demand, starting from $\ell_{16}$,
by performing a backward reachability analysis on the 
SVFG, with
the visiting order of def-use chains marked as
\circled{1} -- \circled{9}. Formally, this is done 
as illustrated in Figure~\ref{fig:mottrace}.
The def-use chains for only the relevant
top-level variables are shown.
By filtering out the four spurious value-flows (marked by \Cross),
\vfsu finds that
only $i$ at $\lab_{12}$ is backward reachable from $z$ 
at $\lab_{16}$. Thus, $\pts(\lvar{\lab_{16},z}) = \{i\}$. 
So no warning for $z$ will be issued.

\vfsu differs from prior work in the following 
three major aspects:
\begin{itemize}
\item \textbf{\emph{On-Demand Strong Updates}}

A whole-program flow-sensitive analysis like \SFS~\cite{hardekopfflow}
can answer $\pts(\lvar{\lab_{16},z})$ precisely but 
must accomplish this task by analyzing all the 16 statements, 
resulting in a total of six strong updates
performed at the six stores, with some strong updates performed unnecessarily for this query. Unfortunately, existing whole-program 
FSCS or even just FS algorithms
do not scale well for large C programs~\cite{AchRob11}.

In contrast, \vfsu computes
$\pts(\lvar{\lab_{16},z})$ precisely by performing
only three
strong updates at $\ell_6$, $\ell_9$ and $\ell_{15}$. 
The earlier a strong update is performed by \vfsu during its
reachability analysis, the fewer the number of
statements traversed. 
After \circled{1} -- \circled{8} have been
performed, \vfsu finds that 
$t3$ points to $d$ only. With a 
strong update performed at $\lab_{15}: *t3 = v$ (\circled{9}),
\vfsu concludes that $\pts(\lvar{\ell_{16},z})\! =\! \{i\}$.

\item {\textbf{\emph{Value-Flow Refinement}}}

Demand-driven  pointer
analyses~\cite{Shang12,Sridharan:2006,yan2011demand,Zhang:2014:ESA,Zheng:2008} are
flow-insensitive and thus suffer from the same
imprecision as their flow-insensitive
whole-program counterparts. In the absence of strong updates,
many spurious aliases (such as $*a$ and $*c$) result, 
causing $z$ to point to both $i$ and $u$.
As a result, a false alarm for $z$ is issued,
as discussed earlier.

However, \vfsu performs strong updates flow-sensitively by
filtering out the four spurious pre-computed value-flows marked by \Cross.
As $t3$ points to $d$ only, $\lab_{15}\! \vfedge{b}\!\lab_{16}$
is spurious and not traversed. In addition, a strong update
is enabled at $\lab_{15}: *t3 = v$, rendering
$\lab_{14}\! \vfedge{b}\!\lab_{15}$ and $\lab_{14}\!
\vfedge{d}\!\lab_{15}$ spurious. 
Finally,
$\lab_{5}\! \vfedge{a}\!\lab_{9}$ is refined away 
due to another 
strong update performed at $\ell_9$. Thus, \vfsu has avoided
many spurious aliases 
(e.g., $*a$ and $*c$) introduced flow-insensitively
by pre-analysis, resulting
in $\pts(\lvar{\ell_{16},z})\!=\!\{i\}$ precisely. Thus,
no warning for $z$ is issued.

\item \textbf{\emph{Query-based Precision Control}}

To balance efficiency and precision, \vfsu operates in a 
hybrid multi-stage analysis framework.
When asked to answer the query
$\pts(\lvar{\ell_{16},z})$ under a budget, say, 
a maximum sequence of three steps traversed, \vfsu will stop its
traversal from $\ell_9$ to $\ell_8$ (at \circled{4})
in Figure~\ref{fig:motex}(d) and  
fall back to the pre-computed results by returning 
$\pts({\lvar{\lab_{16},z}}) = \{u,i\}$. In this case, a false
positive for $z$ will end up being reported.

\end{itemize}


\section{DEMAND-DRIVEN STRONG UPDATES}
\label{sec:intra}

\begin{figure*}[t]
\centering
\renewcommand{\arraystretch}{.7}

\begin{tabular}{l}

\begin{tabular}{ll}
\\
\ruledef{\rulename{ADDR}\hspace{-2mm}}{
\begin{tabular}{lll}
$\lab: p = \&o$ 
\end{tabular}
}{
\begin{tabular}{ll}
$ \lvar{\lab,p} \vfreach \allocation{o}$
\end{tabular}
}
\end{tabular}

\quad

\begin{tabular}{ll}
\\
\ruledef{\rulename{COPY}\hspace{-2mm}}{
\begin{tabular}{lll}
$\lab: p = q$ \quad $\lab' \vfedge{q} \lab $
\end{tabular}
}{
\begin{tabular}{ll}
$ \lvar{\lab,p} \vfreach \lvar{\lab',q}$
\end{tabular}
}
\end{tabular}

\\ \\

\begin{tabular}{ll}
\\
\ruledef{\rulename{PHI}\hspace{-2mm}}{
\begin{tabular}{lll}
$\lab: p = \phi(q,r)$\quad $\lab' \vfedge{q} \lab $ \quad $\lab'' \vfedge{r} \lab $
\end{tabular}
}{
\begin{tabular}{ll}
$ \lvar{\lab,p} \vfreach \lvar{\lab',q}$ \quad $\lvar{\lab,p} \vfreach \lvar{\lab'',r}$
\end{tabular}
}
\end{tabular}

\\ \\

\begin{tabular}{ll}
\\
\ruledef{\rulename{Field}\hspace{-2mm}}{
\begin{tabular}{lll}
$\lab: p =\&q\!\rightarrow\!fld$ 
\quad
$\lab' \vfedge{q} \lab $ 
\quad $\lvar{\lab',q} \vfreach \allocation{o}$
\end{tabular}
}{
\begin{tabular}{ll}
$\lvar{\lab,p} \vfreach \allocation{o_{fld}}$
\end{tabular}
}
\end{tabular}

\\ \\

\begin{tabular}{ll}
\\
\ruledef{\rulename{LOAD}\hspace{-2mm}}{
\begin{tabular}{lll}
$\lab: p = *q$ \ \
$\lab'' \vfedge{q} \lab $ \ \
$\lvar{\lab'',q} \vfreach \allocation{o}$
\ \ $\lab' \vfedge{o} \lab $
\end{tabular}
}{
\begin{tabular}{ll}
$ \lvar{\lab,p} \vfreach \lvar{\lab',o}$
\end{tabular}
}
\end{tabular}

\\ \\

\begin{tabular}{ll}
\\
\ruledef{\rulename{STORE}\hspace{-2mm}}{
\begin{tabular}{lll}
$\lab: *p = q$ \ \
$\lab'' \vfedge{p} \lab $ \ \
$\lvar{\lab'',p}\vfreach \allocation{o}$ \ \ $\lab' \vfedge{q} \lab $
\end{tabular}
}{
\begin{tabular}{ll}
$ \lvar{\lab,o} \vfreach \lvar{\lab',q}$ \quad
\end{tabular}
}
\end{tabular}

\\ \\

\begin{tabular}{ll}
\\
\ruledef{\rulename{SU/WU}\hspace{-2mm}}{
\begin{tabular}{lll}
$\lab: *p = \_$ \quad \quad $\lab' \vfedge{o} \lab $\quad \quad $o \in \calA \backslash \ \! \killset(\lab,p)$
\end{tabular}
}{
\begin{tabular}{ll}
$ \lvar{\lab,o} \vfreach \lvar{\lab',o}$ \quad
\end{tabular}
}
\end{tabular}

\\ \\

\begin{tabular}{ll}
\ruledef{\rulename{Call}\hspace{-4mm}}{
\begin{tabular}{lll}
$\lab: \_ = q(\dots,r,\dots) \quad \muop{o_j}$ \qquad\quad  $\lab': f(\dots,r',\dots) \quad \chiop{o_{i+1}}{o_i}$ \\
\qquad\qquad
$\lab'' \vfedge{q} \lab $ \quad
$\lvar{\lab'',q}\vfreach \allocation{o_f}$ \quad $\lab\vfedge{r}\lab'$  \quad $\lab\vfedge{o}\lab'$
\end{tabular}
}{
\begin{tabular}{ll}
$ \lvar{\lab',r'}\vfreach\lvar{\lab,r}$ \quad$ \lvar{\lab',o}\vfreach\lvar{\lab,o}$\\
\end{tabular}
}
\end{tabular}

\\ \\

\begin{tabular}{ll}
\ruledef{\rulename{Ret}\hspace{-4mm}}{
\begin{tabular}{lll}
$\lab: p = q(\dots)  \quad \chiop{o_{j+1}}{o_j}$ \qquad\quad $\lab': ret_f\ p' \quad \muop{o_i}$ \\
\qquad\qquad 
$\lab'' \vfedge{q} \lab $ \quad
$\lvar{\lab'',q}\vfreach \allocation{o_f}$ \quad $\lab' \vfedge{p'} \lab$  \quad $\lab' \vfedge{o} \lab$
\end{tabular}
}{
\begin{tabular}{ll}
$\lvar{\lab,p} \vfreach \lvar{p',\lab'}$ \quad $\lvar{\lab,o} \vfreach \lvar{\lab',o}$\\
\end{tabular}
}
\end{tabular}


\\ \\
\begin{tabular}{ll}
\\
\ruledef{\rulename{Compo}\hspace{-2mm}}{
\begin{tabular}{lll}
$\locvar \vfreach \locvar'$ \hspace{5mm}$\locvar' \vfreach \locvar''$ 
\end{tabular}
}{
\begin{tabular}{ll}
$\locvar \vfreach \locvar''$
\end{tabular}
}
\end{tabular}

\\ \\ \\
\begin{tabular}{c}

\(
  \killset(\lab,p)=\begin{cases}
    \{o'\} & \text{if $\pts(\lvar{\lab,p})\!=\! \{o'\} \wedge o' \in \singletons$}\\
    \calA & \text{else if $\pts(\lvar{\lab,p})\!=\! \varnothing$}\\
    \varnothing & \text{otherwise}
  \end{cases}
\)

\end{tabular}
\\\hline
\end{tabular}

\caption{Single-stage flow-sensitive \vfsu analysis with
	demand-driven strong updates.
	\label{fig:ddafs}}
\end{figure*}

We introduce our demand-driven pointer analysis with
strong updates, as illustrated in
Figure~\ref{fig:framework}.
We first describe our inference rules in a flow-sensitive setting
(Section~\ref{sec:dda}). We then
discuss our context-sensitive extension (Section~\ref{sec:inter}).
Finally, we present our hybrid multi-stage analysis framework
(Section~\ref{sec:sensivity}). All our analyses
are field-sensitive, thereby enabling more
strong updates to be performed to struct objects.

\subsection{Formalism: Flow-Sensitivity}
\label{sec:dda}

We present a formalization of a single-stage \vfsu consisting
of only a flow-sensitive (FS) analysis. Given a program, 
\vfsu will operate on its SVFG representation $G_{\subvfg}$ constructed by
applying Andersen's analysis~\cite{andersen1994program}
as a pre-analysis, as discussed in
Section~\ref{sec:svfg} and illustrated in Section~\ref{sec:mot}.

Let $\locvars\!=\!\calL\times \calV$ be the set of labeled variables
$lv$, where $\calL$ is the set of 
statement labels and $\calV=\calP\cup\calO$ as defined
in Table~\ref{tab:domain}.
\vfsu conducts a backward reachability analysis flow-sensitively on $G_{\subvfg}$
by computing a reachability relation,
$\vfreach \ \subseteq \locvars \times \locvars$. In
our formalism,
$\lvar{\lab,v} \vfreach \lvar{\lab', v'}$ 
signifies a value-flow from a def of $v'$ at $\ell'$ to a use of $v$ at $\ell$ through one
or multiple value-flow paths in $G_{\subvfg}$.  For an 
object $o$ created at an \textsc{AddrOf} statement, i.e.,
an allocation site at $\ell'$, identified as
$\lvar{\ell',o}$, we must distinguish it from $\lvar{\ell,o}$  accessed elsewhere
at $\ell$ in our inference rules. Our abbreviation for $\lvar{\ell',o}$ is $\allocation{o}$.

Given a points-to query $\lvar{\lab,v}$,
\vfsu computes $\pts(\lvar{\lab,v})$, i.e., the points-to set of 
$\lvar{\lab,v}$ by finding all 
reachable target objects $\allocation{o}$, defined as follows:
\begin{equation}
\label{eq:ptr}
\pts(\lvar{\lab,v}) = \{o \mid \lvar{\lab,v} \vfreach \allocation{o}\}
\end{equation}

Despite flow-sensitivity,
our formalization in Figure~\ref{fig:ddafs}
makes no explicit references to program
points. As \vfsu operates on the def-use chains in $G_{\subvfg}$,
each variable $\lvar{\ell,v}$ mentioned in a rule
appears at the point just after $\ell$, where $v$ is defined.

Let us examine our rules in detail.
By \rulename{ADDR}, 
an object $\allocation{o}$ created at an allocation site $\lab$ 
is backward reachable from $p$ at $\ell$ (or precisely at the point
after $\ell$). The pre-computed
direct value-flows across the top-level variables in $G_{\subvfg}$
are always precise
(\rulename{COPY} and \rulename{PHI}). In partial SSA form,
\rulename{PHI} exists only for top-level variables
(Section~\ref{sec:svfg}). 
 
However, the indirect value-flows across the address-taken 
variables in $G_{\subvfg}$ can be imprecise; they  
need to be refined on the fly to remove
the spurious aliases thus introduced.
When handling a load $p=*q$ in \rulename{LOAD},
we can traverse backwards from $p$ at $\ell$ to the def
of $o$ at 
$\lab'$ only if $o$ is \emph{actually} used by, i.e., aliased with $*q$ 
at $\lab$, which requires the reachability relation 
$\lvar{\lab'',q} \vfreach \allocation{o}$ to be computed recursively. 
A store $*p=q$ is handled similarly 
(\rulename{Store}): 
$q$ defined at $\ell'$
can be reached backwards by $o$ at $\lab$ only if $o$ 
is aliased with $*p$
at $\ell$. 

If $*q$ in a load $\cdots=*q$ 
is aliased with $*p$ in a store
$*p=\cdots$ executed earlier, then $p$ and
$q$ must be both backward reachable
from $\allocation{o}$.  Otherwise, any alias relation
established
between $*p$ and $*q$ in $G_{\subvfg}$ by pre-analysis
must be spurious and will thus be filtered out by
value-flow refinement.

\rulename{SU/WU} models strong and weak updates 
at a store $\ell\!:\! p\!=\!\_$. Defining its kill set $\killset(\ell,p)$ 
involves three cases.
In Case~(1), $p$ points to one \emph{singleton object}
$o'$ in $\singletons$, which contains all objects in $\calA$ except the
local variables in recursion, arrays (treated monolithically) or heap
objects~\cite{strongupdate}. In
Section~\ref{sec:inter}, we discuss how to apply
strong updates to heap objects context-sensitively.
A strong update is then possible to $o$.
By killing its old contents at $\lab'$, no further backward
traversal along the def-use chain $\lab' \vfedge{o} \lab$ is needed.
Thus, $\lvar{\lab,o} \vfreach \lvar{\lab', o}$ is falsified. 
In Case (2), the points-to set of $p$ is empty. Again, further
traversal to $\lvar{\lab',o}$ must be prevented to 
avoid dereferencing a null pointer as is 
standard~\cite{hardekopf2009semi,hardekopfflow,strongupdate}. In
Case (3), a weak update is performed to $o$ so that
its old contents at $\ell'$ are preserved. Thus, 
$\lvar{\lab,o} \vfreach \lvar{\lab', o}$ is
established, which implies that 
the backward traversal along $\lab' \vfedge{o} \lab$ must continue.

\rulename{Field} handles field-sensitivity. For a field access (e.g., $p=\&q\!\rightarrow\!fld$), pointer $p$ points to the field object $o_{fld}$ of object $o$ pointed to by $q$.

\rulename{Call} and \rulename{Ret} handle the
reachability traversal interprocedurally by computing the 
call graph for the program
on the fly instead of relying on the imprecisely
pre-computed call graph built by the pre-analysis as 
in~\cite{hardekopfflow}. In the SVFG,
the interprocedural value-flows sinking into a callee 
function $f$ may come from a spurious indirect callsite 
$\lab$. To avoid this, 
both rules ensure that the function pointer $q$ at $\lab$ actually
points to $f$ (\rulename{Call} and \rulename{Ret}). 
Essentially, given a points-to query $z$ at an 
indirect callsite $\lab:z=(*fp)()$. Instead of analyzing all the callees 
found by the pre-analysis, \vfsu recursively computes
the points-to set of $fp$ to discover new callees at the
callsite and then continues refining $\pts(\lvar{\lab,z})$ using the new callees.

\begin{sidewaysfigure}
    \centering
\newcommand{\precond}[1]{\begin{array}[c]{c}#1\end{array}}
\newcommand{\precondX}[2]{\begin{array}[c]{cc} #1 & #2 \end{array}}
\newcommand{\postcond}[1]{\begin{array}[c]{c}#1\end{array}}
\newcommand{\rulenamefont}[1]{\scalebox{0.78}{\textbf{\rulename{#1}}}}
\newcommand{\infer}[3]{\frac{\precond{#1}}{\postcond{#2}}#3}

\begin{tabular}{c}
\def\arraystretch{1.6}
$
\infer{
  \hspace{-1ex}
  \infer{
    \hspace{-1ex}
    \lab_{13}: t3 = *p  
    \hspace{1ex}
    \lab_{1} \vfedge{p} \lab_{13}
    \hspace{1ex}
    \infer{
      \lab_{1}: p = \&a
    }{
      \lvar{\lab_{1}, p} \vfreach \allocation{a}
    }{
      \rulenamefont{ADdr}
    }
    \hspace{1ex}
    \lab_{9} \vfedge{a} \lab_{13}
  }{
    \lvar{\lab_{13}, t3} \vfreach \lvar{\lab_{9}, a}
  }{
    \rulenamefont{LoaD}
  }
  \hspace{1ex}
  \infer{
    \lab_{9}: *p = t2
    \hspace{1ex}
    \lab_{1} \vfedge{p} \lab_{9}
    \hspace{1ex}
    \fbox{$\lvar{\lab_{1}, p} \vfreach \allocation{a}$}
    \hspace{1ex}
    \lab_{8} \vfedge{t2} \lab_{9}
  }{
    \lvar{\lab_{9}, a} \vfreach \lvar{\lab_{8}, t2}
  }{
    \rulenamefont{STore}
    \hspace*{-8ex}
  }
}{
  \lvar{\lab_{13}, t3} \vfreach \lvar{\lab_{8},t2}
}{
  \rulenamefont{COMpo}
}
$
\\
\vspace*{.5ex}
(a) \scalebox{1.15}{Deriving $\pts(\lvar{\lab_{13},t3})$ (corresponding to \circled{1} -- \circled{4} in Figure~\ref{fig:motex}(d))}
\\
\vspace*{1ex}
\\
\hspace{6mm}
\def\arraystretch{1.6}
$
\infer{
  \hspace{-1ex}
  \infer{
    \hspace{-1ex}
    \infer{
      \hspace{-1ex}
      \lvar{\lab_{13},\! t3} \vfreach\! \lvar{\lab_{8},\! t2}
      \infer{
        \lab_{8}\!:\! t2\! =\! *q
        \hspace{1ex}
        \lab_{2}\! \vfedge{q}\! \lab_{8}
        \infer{
          \lab_{2}\!:\! q\! =\! \&c
        }{
          \lvar{\lab_{2}, q}\! \vfreach\! \allocation{c}
        }{
          \rulenamefont{ADDR}
        }
        \lab_{6}\! \vfedge{c}\! \lab_{8}
        \hspace*{-2ex}
      }{
        \lvar{\lab_{8}, t2}\! \vfreach\! \lvar{\lab_{6}, c}
      }{
        \rulenamefont{LOAD}
        \hspace*{-7ex}
      }
    }{
      \lvar{\lab_{13}, t3}\! \vfreach\! \lvar{\lab_{6}, c}
    }{
      \rulenamefont{COMPO}
    }
    \infer{
      \lab_{6}\!:\! *q\! =\! y
      \hspace{1ex}
      \lab_{2}\! \vfedge{q}\! \lab_{6}
      \hspace{1ex}
      \fbox{$\lvar{\lab_{2}, q}\! \vfreach\! \allocation{c}$}
      \hspace{1ex}
      \lab_{4}\! \vfedge{y\!} \lab_{6}
      \hspace*{-2ex}
    }{
      \lvar{\lab_{6},\!c\!} \vfreach\! \lvar{\lab_{4},\! y}
    }{
      \rulenamefont{STORE}
      \hspace*{-8ex}
    }
  }{
    \lvar{\lab_{13},\! t3} \vfreach\! \lvar{\lab_{4},\! y}
  }{
    \rulenamefont{COMPO}
  }
  \infer{
    \lab_{4}\!:\! y\! =\! \&d
  }{
    \lvar{\lab_{4},\! y} \vfreach\! \allocation{d}
  }{
    \rulenamefont{ADDR}
    \hspace*{-7ex}
  }
}{
  \lvar{\lab_{13},\! t3} \vfreach\! \allocation{d}
}{
  \rulenamefont{COMPO}
}
$
\\
\vspace*{.5ex}
(b) \scalebox{1.15}{Deriving $\pts(\lvar{\lab_{13},t3})$ (corresponding to \circled{5} -- \circled{7} in Figure~\ref{fig:motex}(d))}
\\
\vspace*{1ex}
\\
\def\arraystretch{1.6}
$
\infer{
  \hspace{-1ex}
  \infer{
    \hspace{-1ex}
    \infer{
      \hspace*{-1ex}
      \lab_{16}: z\! =\! *t3
      \hspace{1ex}
      \lab_{13}\! \vfedge{t3}\! \lab_{16}
      \hspace{1ex}
      \lvar{\lab_{13}, t3}\! \vfreach\! \allocation{d}
      \hspace{1ex}
      \lab_{15} \vfedge{d} \lab_{16}\hspace*{-1ex}
    }{
      \hspace*{-2ex}
      \lvar{\lab_{16}, z}\! \vfreach\! \lvar{\lab_{15}, d}
      \hspace*{-2ex}
    }{
      \rulenamefont{LoaD}\hspace*{-1ex}
    }
    \hspace{1ex}
    \infer{
      \hspace*{-1ex}
      \lab_{15}: *t3\! = \!v
      \hspace{1ex}
      \lab_{13} \!\vfedge{t3} \!\lab_{15} \
      \hspace{1ex}
      \fbox{\!$\lvar{\lab_{13}, t3} \!\vfreach \!\allocation{d}$\!}
      \hspace{1ex}
      \lab_{12} \!\vfedge{v} \!\lab_{15}
      \hspace*{-2ex}
    }{
	  \hspace*{-2ex}
      \lvar{\lab_{15}, d}\! \vfreach\! \lvar{\lab_{12}, v}
      \hspace*{-2ex}
    }{
      \rulenamefont{STore}
      \hspace*{-8ex}
    }
  }{
    \hspace*{-2ex}
    \lvar{\lab_{16}, z} \vfreach \lvar{\lab_{12}, v}
    \hspace*{-2ex}
  }{
    \rulenamefont{COMpo}\hspace*{-1ex}
  }
  \hspace{1ex}
  \infer{
    \hspace*{-7ex}
    \lab_{12}: v\! =\! \&i
    \hspace*{-7ex}
  }{
    \hspace*{-.1ex}
    \lvar{\lab_{12}, v} \vfreach \allocation{i}
  }{
    \rulenamefont{ADdr}
    \hspace*{-7ex}
  }
}{
  \lvar{\lab_{16}, z} \vfreach \allocation{i}
}{
  \rulenamefont{COMpo}
}
$
\\
\vspace*{.5ex}
(c) \scalebox{1.15}{Deriving $\pts(\lvar{\lab_{16},z})$ 
(corresponding to \circled{8} -- \circled{9} in Figure~\ref{fig:motex}(d))}
\\
\vspace*{1ex}
\end{tabular}

\caption{
Reachability derivations for $\pts(\lvar{\ell_{16},z})$ shown in
Figure~\ref{fig:motex}(d) (with reuse of cached
points-to results inside each box).  \label{fig:mottrace}
}
    \label{fig:awesome_image}
\end{sidewaysfigure}

Finally, $\vfreach$ is transitive, stated by \rulename{COMPO}.

Let us try all our rules, by first revisiting our motivating example
where strong updates are performed 
(Example~\ref{ex:mot}) and then examining weak updates
(Example~\ref{ex:wu}).

\begin{example}\rm
Figure~\ref{fig:mottrace} shows how we apply the rules of
\vfsu 
to answer $\pts(\lvar{\lab_{16},z})$ illustrated in
Figure~\ref{fig:motex}(d).
\rulename{SU/WU} (implicit in these derivations) is 
applied to $\ell_6$, $\ell_9$ and $\ell_{15}$
to cause a strong update at each store.
At $\ell_{6}$, $\pts(\lvar{\lab_6,q})=\{c\}$, the old
contents in $c$ are killed.
At $\ell_9$, $\lab_5 \!\vfedge{a}\! \lab_9$ becomes spurious since
$\lvar{\lab_9,a}\! \vfreach\! \lvar{\lab_{5},a}$ is falsified.
At $\ell_{15}$,
$\lab_{14} \!\vfedge{b} \!\lab_{15}$ and $\lab_{14} \!\vfedge{d} \!\lab_{15}$ 
are filtered out since 
$\lvar{\lab_{15},b} {\vfreach} \lvar{\lab_{14},b}$ and 
$\lvar{\lab_{15},d}\! \vfreach\! \lvar{\lab_{14},d}$ are 
falsified.  Finally, $\lab_{15} \vfedge{b} \lab_{16}$ is 
ignored since $t3$ points to $d$ only (\rulename{LOAD}). 
\hfill {\small $\Box$}
\label{ex:mot}
\end{example}

\vfsu improves performance 
by caching points-to results to reduce redundant 
traversal, with
reuse happening in the
marked boxes in Figure~\ref{fig:mottrace}.
For example, in Figure~\ref{fig:mottrace}(c),
$\pts(\lvar{\lab_{13},t3})=\{\allocation{d}\}$ computed
in \rulename{LOAD} is reused in \rulename{STORE}.

\begin{figure}[!h]
\centering
\begin{tabular}{l}
\hspace{1mm}
\hfil\Large\textbf{\textcolor{blue}{}}\hfil\includegraphics[scale=0.5]{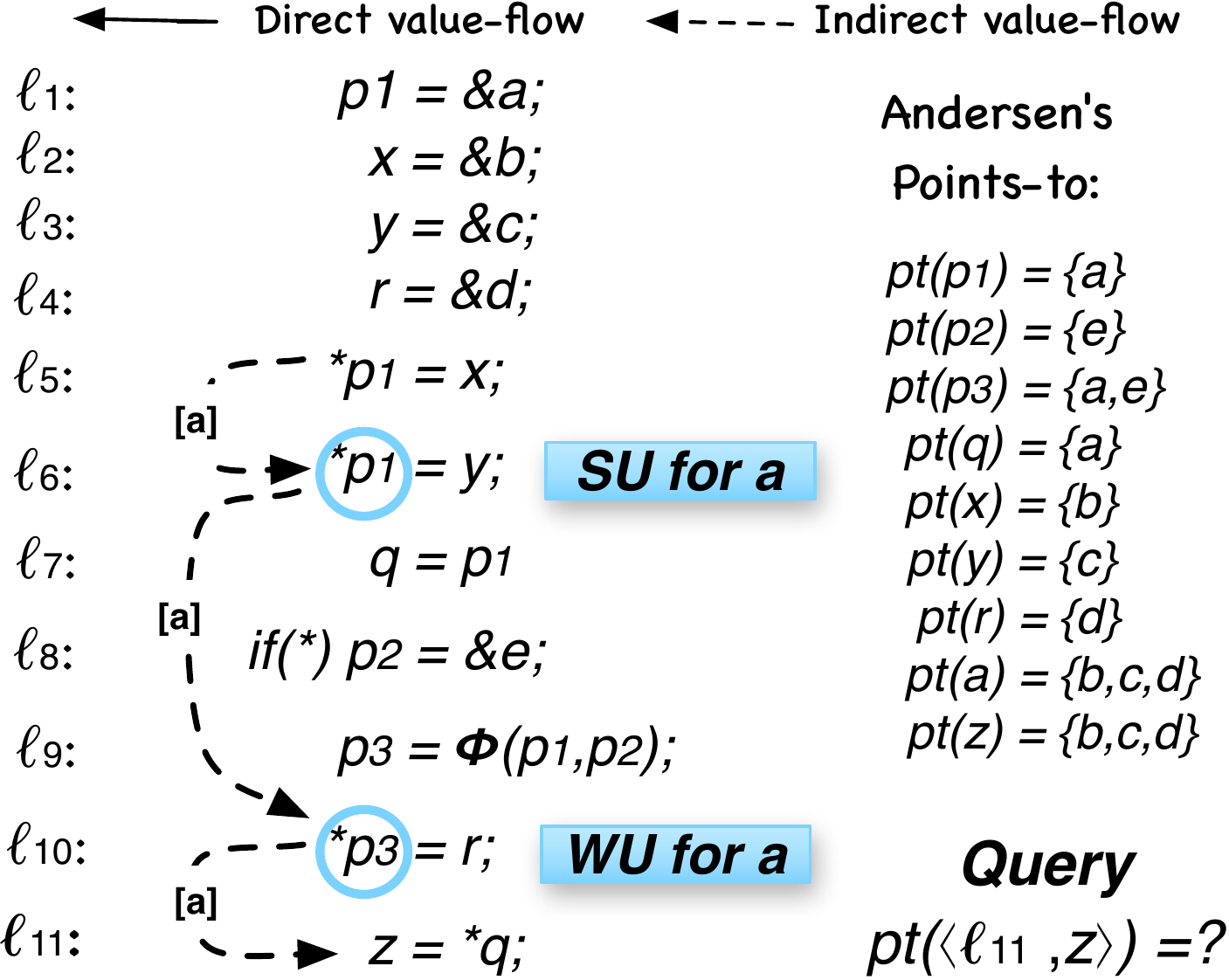}
\\ \hline
\end{tabular} 
\caption{
Resolving $\pts({\lvar{\lab_{11},z}})=\{c,d\}$ with a weak update. 
	\label{fig:wu} }
\end{figure}

\begin{example} \rm
Let us consider a weak update example in Figure~\ref{fig:wu} by 
computing $\pts(\lvar{\ell_{11},z})$ on-demand.
At the confluence point $\lab_9$, $p3$ 
receives the points-to information from 
both $p1$ and $p2$ in its
two branches: $\lvar{\ell_9,p_3}
\vfreach\allocation{a}$ and $\lvar{\ell_9,p_3}\vfreach
\allocation{e}$. Thus, a weak update is performed
to the two locations ${a}$ and ${e}$
at $\lab_{10}$. Let us focus on 
$\allocation{a}$ only. By applying \rulename{STORE},
$\lvar{\ell_{10},a} \vfreach 
\lvar{\ell_4,r} \vfreach \allocation{d}$.
By applying \rulename{SU/WU}, 
$\lvar{\lab_{10},a}\vfreach\lvar{\lab_{6},a}\vfreach\lvar{\lab_{3},y}
\vfreach\allocation{c}$. 
Thus, $\pts(\lvar{\ell_{11},a})=\{c,d\}$, which excludes
$b$ due to a strong update performed at $\ell_6$.
As
$\pts(\lvar{\ell_{7},q})=\{a\}$, we obtain
$\pts(\lvar{\ell_{11},z})=\{c,d\}$.
\label{ex:wu}
\hfill {\small $\Box$}
\end{example}

\begin{figure}[t]
\centering
\begin{tabular}{l}
\hspace{1mm}
\hfil\Large\textbf{\textcolor{blue}{}}\hfil\includegraphics[scale=0.55]{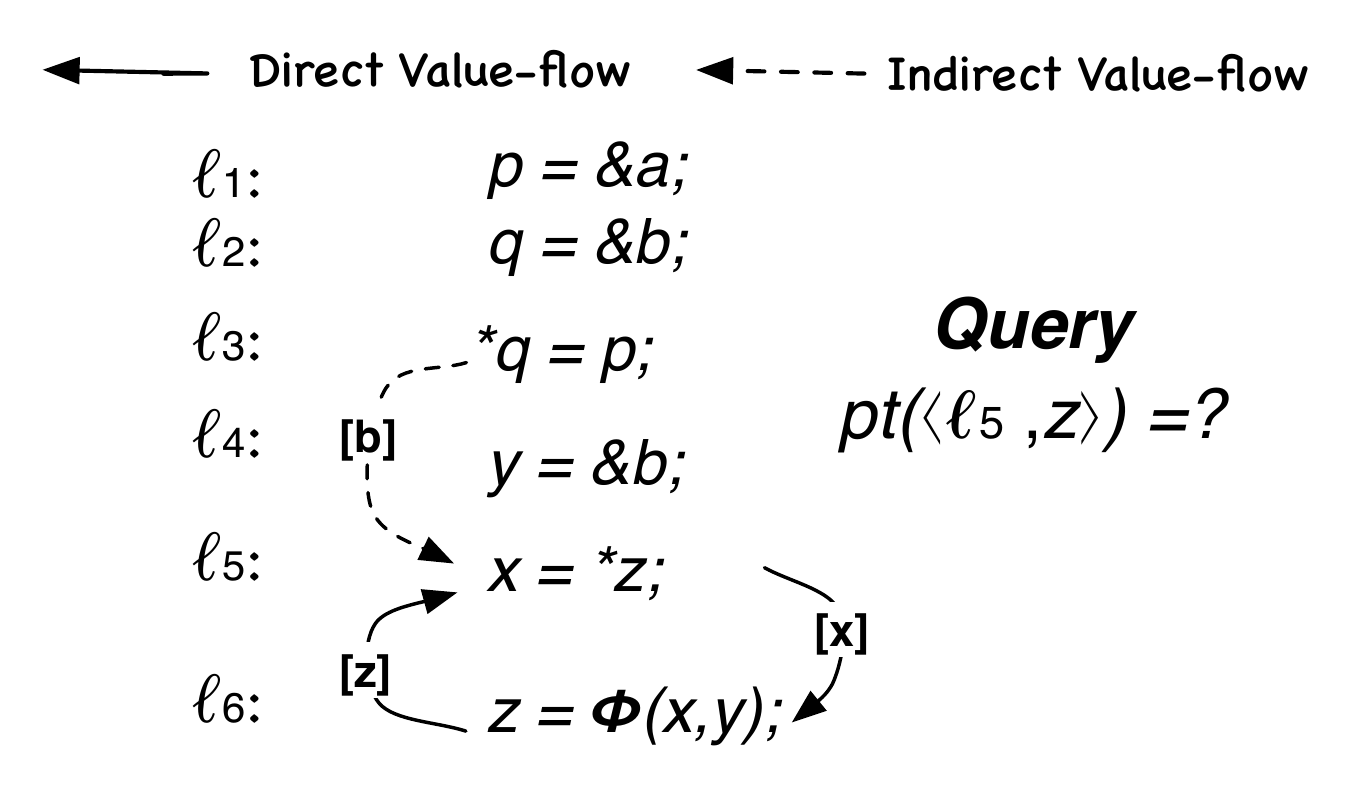}
\\ \hline
\end{tabular} 
\caption{
	Resolving $\pts(\lvar{\lab_{5}, z})\! =\! \{a, b\}$ in a value-flow cycle.
	\label{fig:vfcycle} }
\end{figure}

Unlike~\cite{strongupdate}, which falls back 
to the flow-insensitive points-to information for all weakly updated
objects, \vfsu handles them as precisely as
(whole-program) flow-sensitive analysis subject to a
sufficient budget. In Figure~\ref{fig:wu}, due to 
a weak update performed to $a$ at $\lab_{10}$, 
\pts$(\lvar{\lab_{10},a})=\{c,d\}$ is obtained, forcing their approach
to adopt $\pts(\lvar{\lab_{10},a}) =\{b,c,d\}$ thereafter, causing
\pts$(\lvar{\lab_{11},z}) =\{b,c,d\}$. By maintaining
flow-sensitivity with a strong
update applied to $\ell_6$ to kill $b$, \vfsu obtains
\pts$(\lvar{\lab_{11},z}) = \{c,d\}$ precisely.

\subsubsection{\textbf{Handling Value-Flow Cycles}}

To compute soundly and precisely the points-to information in
a value-flow cycle in the SVFG,
\vfsu retraverses it whenever new points-to information is
found until a fix point is reached.

\begin{example}\rm
\label{ex:vfcycle}
Figure~\ref{fig:vfcycle} shows 
a value-flow cycle formed by
$\lab_5\vfedge{x}\lab_{6}$ and $\lab_{6}\vfedge{z}\lab_{5}$. 
To compute \pts$(\lvar{\lab_{6},z})$, we must
compute \pts$(\lvar{\lab_{5},x})$, which
requires the aliases of $*z$ 
at the load $\lab_{5}: x = *z$ to be found by using
\pts$(\lvar{\lab_{6},z})$.  \vfsu computes \pts$(\lvar{\lab_{6},z})$
by analyzing this value-flow cycle in two iterations. In the first iteration,
a pointed-to target $\allocation{b}$ is found since
$\lvar{\lab_{6},z}\vfreach\lvar{\lab_{4},y}\vfreach\allocation{b}$.
Due to $\lvar{\lab_{2},q}\vfreach\allocation{b}$, $*z$ and $*q$ are found to be
aliases. In the
second iteration, another target $\allocation{a}$ is found since
$\lvar{\lab_{6},z}\vfreach\lvar{\lab_{5},x}\vfreach\lvar{\lab_{3},b}\vfreach\lvar{\lab_{1},p}\vfreach\allocation{a}$.
Thus, \pts$(\lvar{\lab_{6}, z}) = \{a, b\}$ is obtained.
\hfill {\small $\Box$}
\end{example}

\subsubsection{\textbf{Field-Sensitivity}}

Field-insensitive pointer analysis does not
distinguish different fields of a struct object, and 
consequently, gives up opportunities for performing strong updates to 
a struct object, as a struct object may actually
represent its distinct fields. In contrast, 
\vfsu is truly field-sensitive, by avoiding the two
limitations altogether.

\begin{figure}[h]
\centering
\begin{tabular}{l}
\hfil\Large\textbf{\textcolor{blue}{}}\hfil\includegraphics[scale=0.62]{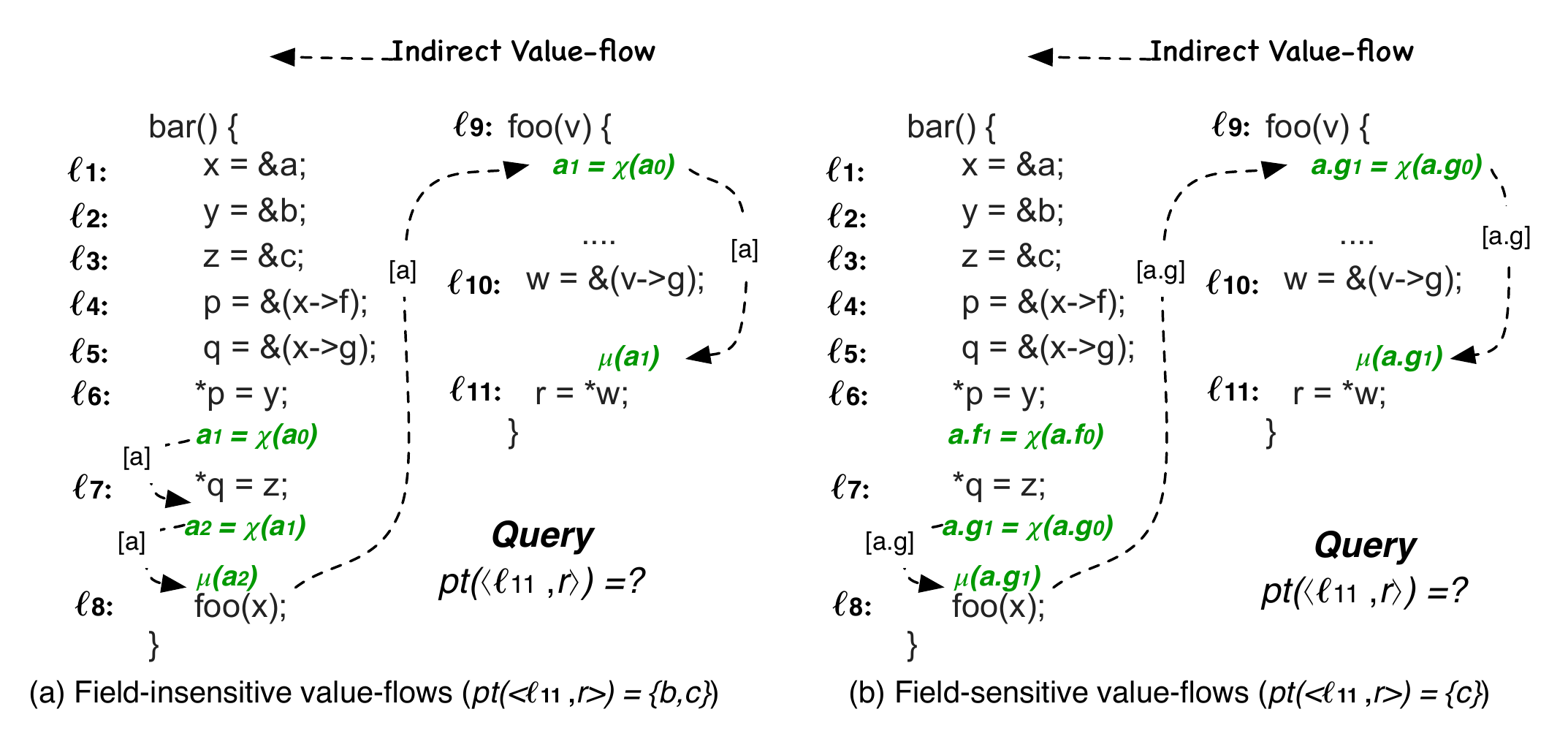}
\\ \hline
\end{tabular} 
\caption{
	Resolving $\pts(\lvar{\lab_{11}, r})\! =\! \{c\}$
	with field-sensitivity.
	\label{fig:field} 
	}
\end{figure}

\begin{example}\rm
\label{ex:field}
Figure~\ref{fig:field} illustrates the effects of
field-sensitivity on computing the points-to information
for $r$ at $\lab_{11}$. Without field-sensitivity, as
illustrated in Figure~\ref{fig:field}(a), the
two statements at $\lab_4$ and $\lab_5$ are analyzed as if
they were $\lab_4: p = \&x$ and $\lab_5: q = \&x$. As
a result, no strong update is possible at $\lab_6$ and
$\lab_7$, since $x$, which represents possibly multiple fields,
is not a singleton. Thus,
$\pts(\lvar{\lab_{11},r})=\{b,c\}$.

\vfsu is field-sensitive.
To answer the points-to query for $r$ at $\lab_{11}$, we
compute first
$\lvar{\lab_{11},r}\vfreach\lvar{\lab_{10},w}$ and then
$\lvar{\lab_{10},v}\vfreach\lvar{\lab_9,v}\vfreach\lvar{\lab_8,x}\vfreach\lvar{\lab_1,x}\vfreach{\allocation{a}}$. By
applying \rulename{Field} at $\lab_{10}$ and \rulename{Load}
at $\lab_{11}$, we obtain
$\lvar{\lab_{11},r}\vfreach \lvar{\lab_{11},a.g}$. By traversing
the three indirect
def-use chains for $a.g$, $\lab_7\vfedge{b.g}\lab_8$, $\lab_8\vfedge{a.g}\lab_9$ and $\lab_9\vfedge{a.g}\lab_{11}$,
backwards from $\lab_{11}$, we obtain $\pts(\lvar{\lab_{11},r}) 
\vfreach\lvar{\lab_9,a.g}\vfreach\lvar{\lab_8,a.g}\vfreach\lvar{\lab_7,a.g}\vfreach
{\lvar{\lab_3,z}} \vfreach{\allocation{c}}$.
\hfill {\small $\Box$}
\end{example}

\subsubsection{Properties \label{sec:prop}}

\begin{theorem}[Soundness] 
\label{thm:sound}
\vfsu is sound in analyzing a program as long as its pre-analysis (for computing the SVFG of the program) is sound.
\begin{proof}
When building the SVFG for a program,
the def-use chains for its top-level variables are 
identified explicitly in its partial SSA form. 
If the pre-analysis (for computing the sparse value-flow
graph of the program)
is sound, then the def-use chains built for all
the address-taken variables are over-approximate. 
According to its inference rules in Figure~4,
\vfsu performs essentially a flow-sensitive analysis 
on-demand, by restricting the propagation of points-to
information along the precomputed def-use chains, and falls back to
the sound points-to information computed by the pre-analysis when
running out of its given budgets. Thus, \vfsu is sound if
the pre-analysis is sound.
\end{proof}
\end{theorem}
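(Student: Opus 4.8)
The plan is to reduce soundness of \vfsu to two separable invariants about the graph it traverses and the relation it computes, and then discharge each by induction. First I would fix a semantic notion of soundness: for every query $\lvar{\lab,v}$, the set $\pts(\lvar{\lab,v})$ obtained from~(\ref{eq:ptr}) must contain \emph{every} object that $v$ can point to at the program point just after $\lab$ in any concrete execution. I would then split the goal into (i) \emph{SVFG completeness}: whenever the pre-analysis is sound, every real def-use chain is carried by some edge of $G_{\subvfg}$; and (ii) \emph{refinement soundness}: the reachability relation $\vfreach$ never discards a reachable allocation site $\allocation{o}$ that corresponds to a genuine value-flow. Soundness of \vfsu then follows by composing the two, since~(\ref{eq:ptr}) reads off exactly the $\allocation{o}$ reachable under $\vfreach$.

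For SVFG completeness I would argue by cases over the instruction types of Table~\ref{tab:domain}, reusing the construction rules of Figure~\ref{fig:pre-vf-rules}. Top-level variables are already in SSA form, so their def-use chains are exact and \rulename{Intra-Top}, \rulename{Inter-call-Top}, \rulename{Inter-ret-Top} connect the unique definition to each use with no loss. For address-taken variables the key observation is that the $\mathmu/\mathchi$ annotations are inserted using the pre-analysis sets $\apts$; if the pre-analysis over-approximates the concrete points-to sets, then every concrete definition and use of an object is annotated, and the memory-SSA renaming therefore emits an edge for every real chain. Hence no genuine value-flow is ever missing from the graph that \vfsu traverses.

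The core of the argument is refinement soundness, which I would prove by structural induction on the derivation of $\vfreach$ in Figure~\ref{fig:ddafs}. The rules \rulename{ADDR}, \rulename{COPY}, \rulename{PHI}, \rulename{Field}, \rulename{Call}, \rulename{Ret} and \rulename{Compo} merely follow edges already present in $G_{\subvfg}$, so they preserve the invariant directly from the induction hypothesis together with SVFG completeness for the edges they require; the fixpoint re-traversal used for value-flow cycles guarantees that $\vfreach$ is complete with respect to the graph. The delicate rules are \rulename{LOAD}, \rulename{STORE} and \rulename{SU/WU}, because they prune edges. In \rulename{LOAD} and \rulename{STORE} an indirect edge $\lab'\vfedge{o}\lab$ is traversed only when $o$ lies in the recursively computed points-to set of the dereferenced pointer; by the induction hypothesis that set is itself sound, so every real alias is retained and no real value-flow is pruned. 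For \rulename{SU/WU} I would show the kill set removes only spurious flows: when $\pts(\lvar{\lab,p})$ is a singleton in $\singletons$ the store writes a unique concrete cell, so the overwritten object's old contents cannot survive to the use and suppressing $\lvar{\lab,o}\vfreach\lvar{\lab',o}$ is justified; the empty-points-to case writes nothing, and the default case performs a weak update that keeps the old contents, both trivially sound.

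The hard part will be this last point: proving that killing the edge in the singleton (strong-update) case never discards a real value-flow. A fully rigorous treatment needs a concrete operational semantics together with an abstraction relation showing that the static condition ``$o'$ is a singleton and is the sole target of $p$'' implies that in every concrete run reaching the store the single cell abstracted by $o'$ is definitely overwritten, so its prior value cannot reach any downstream use; note this relies on $\pts(\lvar{\lab,p})$ being sound (the induction hypothesis), so a sound set equal to $\{o'\}$ forces the real target to be exactly $o'$. Finally I would close the budget case: when the per-query budget is exhausted \vfsu returns the pre-computed flow-insensitive set, and since refinement only ever deletes spurious flows (invariant (ii)), any partial result still contains all real targets, while the fallback can only enlarge it to the sound pre-analysis super-set. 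Composing SVFG completeness, refinement soundness, and the fallback argument then yields soundness of \vfsu whenever the pre-analysis is sound.
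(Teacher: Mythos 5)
Your proposal is correct and, at the top level, follows the same skeleton as the paper's own proof: your invariant (i) (SVFG completeness) is the paper's observation that top-level def-use chains are exact in partial SSA form while a sound pre-analysis makes the address-taken chains over-approximate; your invariant (ii) plus the fallback case is the paper's single sentence that \vfsu ``restricts the propagation of points-to information along the precomputed def-use chains'' and reverts to the sound pre-analysis on budget exhaustion. The difference is one of rigor: the paper stops there, whereas you identify that restricting propagation is exactly the step that threatens soundness --- \rulename{LOAD}, \rulename{STORE} and above all \rulename{SU/WU} prune edges --- and you set up the structural induction on derivations of $\vfreach$ needed to show each pruning is harmless. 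In particular you isolate the crux, nowhere mentioned in the paper's proof: that the singleton case of $\killset(\lab,p)$ kills only values that cannot survive the store, which requires an operational semantics and an abstraction relation, plus the induction hypothesis that the recursively computed $\pts(\lvar{\lab,p})$ is itself sound, so that $\pts(\lvar{\lab,p})=\{o'\}$ with $o'\in\singletons$ pins down the one concrete cell that is definitely overwritten. That is the real content of the theorem, and your plan buys a proof that would withstand scrutiny where the paper offers only an appeal to the reader's trust in flow-sensitive strong-update reasoning. One clause of your budget argument is off, though harmlessly: a traversal interrupted mid-query does \emph{not} ``still contain all real targets'' (incomplete exploration misses targets); soundness on exhaustion holds because \vfsu discards the partial result and returns the pre-analysis answer (or re-issues the query at a coarser stage), so the argument should rest on that --- together with the requirement that only fully resolved sub-results are ever cached or used in kill-set and alias tests --- rather than on any property of the partial result itself.
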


\begin{theorem}[Precision] 
\label{thm:prec}
Given a points-to query $\lvar{\lab,v}$, 
$\pts(\lvar{\lab,v})$ computed by \vfsu is 
the same as that computed by (whole-program) \FS if \vfsu
can successfully resolve the points-to query within a given budget. 
\begin{proof}
Let $\pts_{\vfsu}(\lvar{\lab,v})$ and $\pts_{\FS}(\lvar{\lab,v})$ be the points-to
sets computed by \vfsu and \FS, respectively. By 
Theorem~1,
$\pts_{\vfsu}(\lvar{\lab,v})\supseteq\pts_{\FS}(\lvar{\lab,v})$, since
\vfsu is a demand-driven version of \FS and thus
cannot be more precise. To show that
$\pts_{\vfsu}(\lvar{\lab,v})\subseteq\pts_{\FS}(\lvar{\lab,v})$, we note that \vfsu
operates on the SVFG of the program to improve its efficiency, by
also filtering out 
value-flows imprecisely pre-computed by the pre-analysis.
For the
top-level
variables, their direct value-flows are precise. So \vfsu proceeds
exactly the same as \FS (\rulename{ADDR}, \rulename{COPY}, 
\rulename{PHI}, \rulename{Field}, \rulename{Call}, \rulename{Ret} and \rulename{COMPO}). For the
address-taken variables, \vfsu establishes the same indirect
value-flows flow-sensitively as \FS does but in a demand-driven manner,
by refining away imprecisely pre-computed value-flows
(\rulename{LOAD}, \rulename{STORE}, \rulename{SU/WU},  \rulename{Call}, \rulename{Ret} and 
\rulename{COMPO}). If \vfsu can complete its query within 
the given budget, then
$\pts_{\vfsu}(\lvar{\lab,v})\subseteq\pts_{\FS}(\lvar{\lab,v})$.
Thus, $\pts_{\vfsu}(\lvar{\lab,v})=\pts_{\FS}(\lvar{\lab,v})$.
\end{proof}
\end{theorem}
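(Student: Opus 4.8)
The plan is to establish the set equality $\pts_{\vfsu}(\lvar{\lab,v}) = \pts_{\FS}(\lvar{\lab,v})$ by proving two inclusions, the harder of which is that \vfsu introduces no spurious targets beyond those that \FS retains. I would phrase both directions through a single correspondence between \vfsu's backward reachability relation $\vfreach$ and the flow-sensitive solution, namely that $\lvar{\lab,v} \vfreach \allocation{o}$ is derivable by the rules of Figure~\ref{fig:ddafs} if and only if $o \in \pts_{\FS}(\lvar{\lab,v})$, under the assumption that no budget cutoff truncates the derivation. Given this correspondence, equality follows immediately from the definition $\pts(\lvar{\lab,v}) = \{o \mid \lvar{\lab,v} \vfreach \allocation{o}\}$ in~(\ref{eq:ptr}), so the whole argument reduces to proving the ``if'' and ``only if'' halves separately.

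For the inclusion $\pts_{\vfsu}(\lvar{\lab,v}) \supseteq \pts_{\FS}(\lvar{\lab,v})$ (completeness of \vfsu relative to \FS), I would observe that every target $o$ that \FS places in $\pts_{\FS}(\lvar{\lab,v})$ witnesses a flow-sensitive def-use path from the allocation site of $o$ to the use at $\ell$ that survives all intervening strong updates. Since the SVFG over-approximates the program's def-use chains (Theorem~\ref{thm:sound} and Section~\ref{sec:svfg}), every edge of such a path is present in the SVFG and is therefore available to \vfsu's backward traversal. Because \vfsu's kill set $\killset(\lab,p)$ uses exactly the same singleton criterion as \FS, \vfsu prunes a def-use edge only when \FS would also kill the corresponding definition; hence \vfsu never discards an edge that \FS keeps, and the surviving path is rediscovered by a straightforward induction on its length, yielding $o \in \pts_{\vfsu}(\lvar{\lab,v})$.

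The substantive direction is $\pts_{\vfsu}(\lvar{\lab,v}) \subseteq \pts_{\FS}(\lvar{\lab,v})$ (soundness of \vfsu relative to \FS), which is where the budget assumption is essential: when \vfsu completes without falling back to the pre-computed flow-insensitive information, every spurious SVFG edge along which \FS would not propagate must have been refined away. I would prove this by structural induction on the derivation tree of $\lvar{\lab,v} \vfreach \allocation{o}$, maintaining the invariant that each rule conclusion corresponds to an \FS fact. The base case is \rulename{ADDR}; the top-level cases \rulename{COPY}, \rulename{PHI} and \rulename{Field} are immediate because direct value-flows are exact in partial SSA form; the memory cases \rulename{LOAD} and \rulename{STORE} require that the recursively established alias condition $\lvar{\lab'',q} \vfreach \allocation{o}$ coincides with \FS's requirement that $o$ be aliased with the dereferenced location at that point; and \rulename{SU/WU} must be shown to reproduce \FS's update decision exactly, falsifying $\lvar{\lab,o} \vfreach \lvar{\lab',o}$ precisely in the singleton (Case~1) and empty (Case~2) cases while preserving it under a weak update. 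Transitive closure via \rulename{COMPO} then composes these facts along the full def-use path.

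The hard part will be the mutual recursion between reachability and strong-update decisions: the kill set in \rulename{SU/WU} is defined in terms of $\pts(\lvar{\lab,p})$, which is itself computed by the very reachability relation under analysis, and this circularity becomes a genuine fixpoint in the presence of value-flow cycles, as in Example~\ref{ex:vfcycle}. To handle it cleanly I would recast both \FS and \vfsu as least fixpoints of the same monotone operator over flow-sensitive points-to maps---\FS iterating forward over program points and \vfsu iterating the backward rules to saturation---and argue by a simultaneous induction, or a Knaster--Tarski coincidence-of-fixpoints argument, that the two fixpoints agree. A secondary point requiring care is the interprocedural handling in \rulename{Call} and \rulename{Ret}: since \vfsu resolves the call graph on the fly whereas the baseline \FS of~\cite{hardekopfflow} may rely on a pre-computed call graph, the equality should be understood relative to a common call-graph resolution, as otherwise \vfsu could be strictly \emph{more} precise at indirect calls; I would make this assumption explicit to keep the claim tight.
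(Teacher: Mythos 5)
Your proposal is correct, and it follows the same basic decomposition as the paper's proof---two inclusions, with the harder $\pts_{\vfsu}(\lvar{\lab,v})\subseteq\pts_{\FS}(\lvar{\lab,v})$ direction argued by a rule-by-rule correspondence between the derivation rules of Figure~\ref{fig:ddafs} and flow-sensitive propagation---but your version is substantially more careful at exactly the points the paper glosses over. The paper dispatches $\pts_{\vfsu}(\lvar{\lab,v})\supseteq\pts_{\FS}(\lvar{\lab,v})$ by citing Theorem~\ref{thm:sound} together with the remark that a demand-driven version of \FS ``cannot be more precise,'' whereas you prove it directly from over-approximation of the SVFG plus agreement of kill decisions; yours is the more honest route, since Theorem~\ref{thm:sound} asserts soundness with respect to the pre-analysis and concrete behavior, not with respect to \FS, so the paper's citation does not by itself yield that inclusion. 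More importantly, you isolate the genuine technical crux that the paper never mentions: $\killset(\lab,p)$ in \rulename{SU/WU} is defined in terms of $\pts(\lvar{\lab,p})$, which is itself computed by the reachability relation under comparison, so neither inclusion can be established by a naive induction in isolation; your simultaneous induction / coincidence-of-least-fixpoints argument is the standard and correct way to break this circularity, and any fully rigorous proof needs something like it (the paper's one-paragraph sketch simply assumes the correspondence holds rule by rule). Your final caveat about call-graph resolution is also well taken: the claimed equality only holds against a whole-program \FS baseline that resolves indirect calls on the fly as \vfsu does via \rulename{Call} and \rulename{Ret}---which the paper's own \SFS implementation in Section~\ref{sec:impl} indeed does---whereas against a baseline using Andersen's pre-computed call graph, \vfsu could be strictly more precise and only the $\subseteq$ inclusion would survive. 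In short: same skeleton as the paper, but your plan supplies the fixpoint machinery and the explicit side conditions that the paper's proof takes for granted.
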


\begin{figure*}[t]
	\centering
\scalebox{1}{
\hspace{-4mm}
\renewcommand{\arraystretch}{1}

\begin{tabular}{l}

\begin{tabular}{ll}
\hspace{-3mm}
\begin{tabular}{ll}
\ruledef{\rulename{C-ADDR}\hspace{-4mm}}{
\begin{tabular}{lll}
$\cxt,\lab: p = \&o$ 
\end{tabular}
}{
\begin{tabular}{ll}
$\lvar{\cxt, \lab, p} \vfreach (\cxt, \allocation{o})$
\end{tabular}
}
\end{tabular}
\begin{tabular}{ll}
\ruledef{\rulename{C-COPY}\hspace{-4mm}}{
\begin{tabular}{lll}
$\cxt,\lab: p = q$ \quad $ \lab' \vfedge{q} \lab $
\end{tabular}
}{
\begin{tabular}{ll}
$ \lvar{\cxt,\lab,p} \vfreach \lvar{\cxt,\lab',q}$
\end{tabular}
}
\end{tabular}

\end{tabular}

\\ \\  
\begin{tabular}{ll}
\ruledef{\rulename{C-PHI}\hspace{-4mm}}{
\begin{tabular}{lll}
$\cxt,\lab: p = \phi(q,r)$ \quad $ \lab' \vfedge{q} \lab $ \quad $ \lab'' \vfedge{r} \lab $
\end{tabular}
}{
\begin{tabular}{ll}
$ \lvar{\cxt,\lab,p} \vfreach \lvar{\cxt,\lab',q}$ \quad $\lvar{\cxt,\lab,p} \vfreach \lvar{\cxt,\lab'',r}$
\end{tabular}
}
\end{tabular}

\\ \\ 

\begin{tabular}{ll}
\ruledef{\rulename{C-Field}\hspace{-2mm}}{
\begin{tabular}{lll}
$\cxt,\lab: p =\&q\!\rightarrow\!fld$ 
\quad $ \lab' \vfedge{q} \lab $
\quad $\lvar{\cxt,\lab',q} \vfreach (\cxt',\allocation{o})$
\end{tabular}
}{
\begin{tabular}{ll}
$\lvar{\cxt,\lab,p} \vfreach(\cxt', \allocation{o.fld})$
\end{tabular}
}
\end{tabular}

\\ \\ 

\begin{tabular}{ll}
\ruledef{\rulename{C-LOAD}\hspace{-4mm}}{
\begin{tabular}{lll}
$\cxt,\lab: p = *q$
\ \ $ \lab'' \vfedge{q} \lab $
\ \ $\lvar{\cxt,\lab'',q} \vfreach (\cxt',\allocation{o})$
\ \ $ \lab' \vfedge{o} \lab $
\end{tabular}
}{
\begin{tabular}{ll}
$\lvar{\cxt,\lab,p} \vfreach \lvar{\cxt',\lab',o}$
\end{tabular}
}
\end{tabular}

\\ \\ 
\begin{tabular}{ll}
\ruledef{\rulename{C-Store}\hspace{-4mm}}{
\begin{tabular}{lll}
$\cxt,\lab: *p = q$ \ 
 \ $ \lab'' \vfedge{p} \lab $
\ \
$\lvar{\cxt,\lab'',p} \vfreach (\cxt',\allocation{o})$
\ \ $ \lab' \vfedge{q} \lab $
\end{tabular}
}{
\begin{tabular}{ll}
$ \lvar{\cxt',\lab,o} \vfreach \lvar{\cxt,\lab',q}$ \\
\end{tabular}
}
\end{tabular}

\\ \\ 

\begin{tabular}{ll}
\ruledef{\rulename{C-SU/WU}\hspace{-4mm}}{
\begin{tabular}{lll}
$\cxt,\lab: *p = \_$ \quad $ \lab' \vfedge{o} \lab $  \quad \quad $(\cxt',o) \in \calA \backslash \ \! \killset(\cxt,\lab,p)$
\end{tabular}
}{
\begin{tabular}{ll}
$ \lvar{\cxt',\lab,o} \vfreach \lvar{\cxt',\lab',o}$
\end{tabular}
}
\end{tabular}

\\ \\ 
\begin{tabular}{ll}
\ruledef{\rulename{C-Compo}\hspace{-4mm}}{
\begin{tabular}{lll}
$\locvar \vfreach \locvar'$ \hspace{1mm}$\locvar' \vfreach \locvar''$ 
\end{tabular}
}{
\begin{tabular}{ll}
$\locvar \vfreach \locvar''$
\end{tabular}
}
\end{tabular}

\\ \\ 
\begin{tabular}{ll}
\ruledef{\rulename{C-Call}\hspace{-4mm}}{
\begin{tabular}{lll}
$\cxt,\lab: \_ = q(\dots,r,\dots) \quad \muop{o_j}$ 
\qquad\quad $ \lab'' \vfedge{q} \lab $ \
$\lvar{\cxt,\lab'',q}\vfreach (\_,\allocation{o_f})$ \ $\lab\vfedge{r}\lab'$  \ $\lab\vfedge{o}\lab'$   \\
$\cxt',\lab': f(\dots,r',\dots) \quad \chiop{o_{i+1}}{o_i}$ \qquad\quad $\cxt = \cxt'\cpop\lab$ 
\end{tabular}
}{
\begin{tabular}{ll}
$ \lvar{\cxt',\lab',r'}\vfreach\lvar{\cxt,\lab,r}$ \quad$ \lvar{\cxt',\lab',o}\vfreach\lvar{\cxt,\lab,o}$\\
\end{tabular}
}
\end{tabular}

\\ \\ 
\begin{tabular}{ll}
\ruledef{\rulename{C-Ret}\hspace{-4mm}}{
\begin{tabular}{lll}
$\cxt,\lab: p = q(\dots) \ \chiop{o_{j+1}}{o_j}$ 
\qquad\quad $ \lab'' \vfedge{q} \lab $ 
\ $\lvar{\cxt,\lab'',q}\vfreach (\_,\allocation{o_f})$ \ $\lab' \vfedge{p'} \lab$  \ $\lab' \vfedge{o} \lab$ \\ $\cxt', \lab': ret_f\ p' \quad \muop{o_i}$ \qquad\qquad\qquad\quad $\cxt'=\cxt\cpush\lab$
\end{tabular}
}{
\begin{tabular}{ll}
$\lvar{\cxt,\lab,p} \vfreach \lvar{\cxt',p',\lab'}$ \quad $\lvar{\cxt,\lab,o} \vfreach \lvar{\cxt',\lab',o}$\\
\end{tabular}
}
\end{tabular}


\\ \\ 
\begin{tabular}{c}

\(
  \killset(\cxt,\lab,p)=\begin{cases}
    \{(\cxt', o')\} & \text{if $\pts(\lvar{\cxt,\lab,p})\!=\! \{(\cxt',o')\} \wedge (c',o') \in \cxtsingletons$}\\
    \calA & \text{else if $\pts(\lvar{\cxt,\lab,p})\!=\! \varnothing$}\\
    \varnothing & \text{otherwise}
  \end{cases}
\)

\end{tabular}

\\\hline
\end{tabular}

}
\caption{Single-stage flow- and context-sensitive \vfsu  analysis
with demand-driven strong updates.
	\label{fig:ddacxt}}
\end{figure*}

\subsection{Formalism: Flow- and Context-Sensitivity}
\label{sec:inter}

We extend our flow-sensitive formalization 
by considering also context-sensitivity to enable more 
strong
updates (especially now for heap objects). 
We solve a \emph{balanced-parentheses} problem 
by matching calls and returns to filter out 
unrealizable inter-procedural paths
\cite{Lu13,Reps:1995:PID,Shang12,Sridharan:2006,yan2011demand}.
A context stack $\cxt$ is encoded as a sequence of callsites, 
[$\cc_{1} \dots \cc_{m}$], where $\cc_i$ is a call instruction $\lab$.
$\cxt \cpush \cc$ denotes an operation for pushing a callsite $\cc$ 
into $\cxt$. $\cxt \cpop \cc$ pops
$\cc$ from $\cxt$ if $\cxt$ contains $\cc$ as its top value 
or is empty since a 
realizable path may start and end in different functions.

With context-sensitivity, 
a statement is parameterized additionally
by a context $\cxt$, e.g., $\cxt,\lab\!:\!p\!=\!\&o$,
to represent its instance when its containing
function is analyzed under $\cxt$. 
A labeled variable $lv$ has the form
$\lvar{\cxt,\lab,v}$, representing variable 
$v$ accessed at statement $\lab$
under context $\cxt$.
An object $\allocation{o}$ that is created at an \textsc{AddrOf} statement under context $\cxt$
is also context-sensitive, identified as $(\cxt,\allocation{o})$.

Given a points-to query $\lvar{\cxt,\lab,v}$,
\vfsu computes its points-to set both flow- and context-sensitively 
by applying the rules given in Figure~\ref{fig:ddacxt}:
\begin{equation}
	\pts(\lvar{\cxt,\lab,v}) = \{(\cxt',o) \mid \lvar{\cxt,\lab,v} \vfreach (\cxt', \allocation{o})\}
\end{equation}
where the reachability relation $\vfreach$ is now also
context-sensitive.

Passing parameters to and returning results from a callee
invoked at a callsite are handled by \rulename{C-CALL} and 
\rulename{C-Ret}. 
\rulename{C-Call} deals with the direct and
indirect value-flows backwards from the entry instruction of a
callee function to each of its callsites
based on the call graph 
computed on the fly similarly as \rulename{Call} in
Figure~\ref{fig:ddafs}, except that \rulename{C-Call} 
is context-sensitive.
Likewise, \rulename{C-Ret} deals with the direct and
indirect value-flows backwards from a callsite to
the return instruction of every callee function.

With context-sensitivity, \vfsu will filter out more spurious 
value-flows generated by Andersen's analysis, thereby producing more precise points-to information 
to enable more strong updates (\rulename{C-SU/WU}). At a store
$\cxt, \ell: *p = \_$, its kill set is context-sensitive. A strong
update is applied if $p$ points to a \emph{context-sensitive singleton}
$(\cxt',o')\in\cxtsingletons$, where  $o'$ is a (non-heap) singleton defined in 
Section~\ref{sec:dda} or a heap object with $\cxt'$ being
a \emph{concrete} context, i.e., one 
not involved in recursion or loops.

\begin{example}
\label{ex:cxt}
Let us use an example given
in Figure~\ref{fig:cxtex} to illustrate the effects of 
context-sensitive strong updates on computing the
points-to information for $z$ at $\lab_5$. 
This example is adapted from a real application,
\texttt{milc-v6}, given in Figure~\ref{fig:code_snippet}(c). 
Without context-sensitivity, \vfsu will only perform a 
weak update at $\lab_8: *x = y$, since $x$ 
points to both $a$ and $b$ passed into \texttt{foo()}
from the two callsites at $\lab_3$ and $\lab_4$. As a result,
$z$ at $\ell_5$ is found to point to not only
what $y$ points to, i.e., $c$ but also what $b$ points to
previously (not shown to avoid cluttering). With 
context-sensitivity, \vfsu finds that
$ \lvar{[\ ],\lab_{5},z}\vfreach
\lvar{[\ ],\lab_{5},b}\vfreach
\lvar{[\ ],\lab_4,b}\vfreach\lvar{[\lab_{4}],\lab_{9},b}\vfreach\lvar{[\lab_{4}],\lab_{8},b}\vfreach\lvar{[\lab_{4}],\lab_{7},y}\vfreach([\lab_{4}],\allocation{c})$. 
Since
$\lvar{[\lab_4], \lab_8, x}$ points to a context-sensitive
singleton $(\lab_4, b)$ at $\lab_8$, a strong update is performed to
$b$ at $\lab_8$, causing the old contents in $b$ to be killed.
\hfill {\small $\Box$}
\end{example}

Given a program, the
SCCs (strongly connected components) in its call graph are 
constructed on the fly. \vfsu handles the SCCs  in the
program
context-sensitively but the function calls inside a SCC 
context-insensitively as in \cite{Sridharan:2006}.

\begin{figure}[t]
\centering
\vspace*{-1cm}
\begin{tabular}{l}
\hfil\Large\textbf{\textcolor{blue}{}}\hfil\includegraphics[scale=0.65]{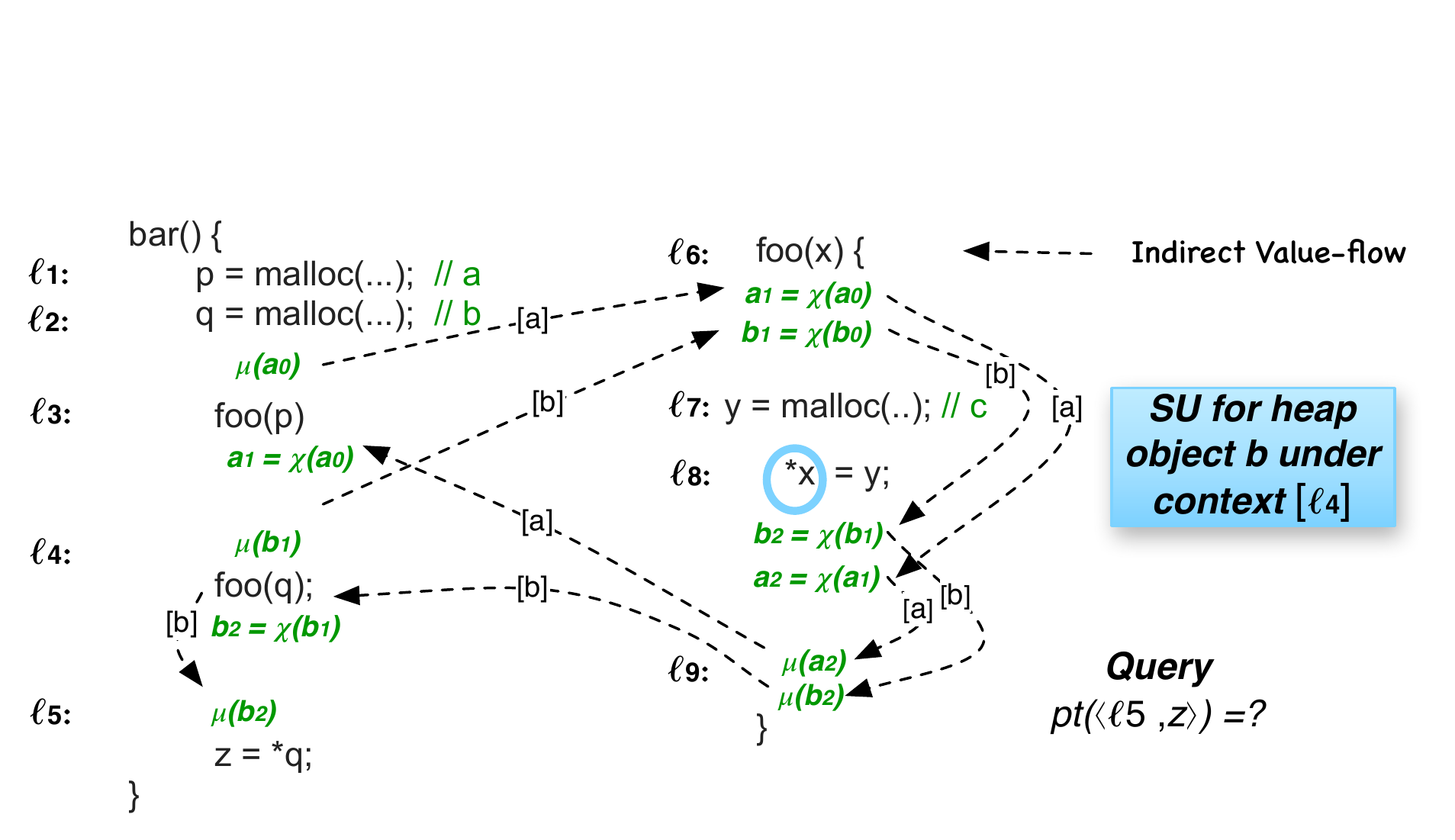}
\\
\hline
\end{tabular} 
\caption{
Resolving $\pts([\ ],\lab_5,z) = \{[\lab_4], c\}$) with 
	context-sensitive strong updates. 
\label{fig:cxtex} }
\end{figure}

\subsection{SUPA: Hybrid Multi-Stage Analysis
\label{sec:sensivity}}

To facilitate efficiency and precision tradeoffs in answering
on-demand queries, \vfsu, as illustrated in Figure~\ref{fig:framework},
organizes its analyses in multiple stages sorted in 
increasing efficiency but decreasing precision. 
Let there be $M$
queries issued successively. Let the $N$ stages of 
\vfsu,
$\textbf{\textit{Stage[0]}},\cdots, \textbf{\textit{Stage[N-1]}}$,
be configured with budgets $\bgt_0,\cdots, \bgt_{N-1}$, 
respectively.
In our current implementation, each budget is specified as the maximum
number of def-use chains traversed in the SVFG of the program.

\vfsu answers a query on-demand by applying its $N$ analyses 
successively, starting from $\textbf{\textit{Stage[0]}}$. If the query
is not answered after budget $\bgt_i$ has been exhausted at stage $i$,
\vfsu re-issues the query at stage $i+1$, and eventually
falls back to the results that are 
pre-computed by pre-analysis.

\vfsu caches fully computed points-to 
information in a query and reuses it in subsequent
queries, as illustrated in Figure~\ref{fig:mottrace}.  
Let $\calQ$ be the set of queried variables
issued from a program. Let $\calI\supseteq\calQ$ be the set 
of variables reached from $\calQ$ during the
analysis. Let $(\lab,v)\in \calQ$ be
a queried variable. We write
$\pts_{\bgt_i}^i(\lvar{\Delta_i,\lab,v})$ to
represent the points-to set of a variable $\lvar{\lab,v}$
computed at stage $i$ under budget $\bgt_i$, where
$\Delta_i$ is a contextual qualifier at stage $i$
(e.g., $\cxt$ in FSCS).  By convention,
$\pts_{\bgt_N}^N(\lvar{\Delta_N,\lab,v})$ denotes the
points-to set obtained by pre-analysis, at 
$\textbf{\textit{Stage[N]}}$ (conceptually). 

When resolving $\pts_{\bgt_i}^i(\lvar{\Delta_i,\lab,v})$
at stage $i$, suppose \vfsu has reached
a variable $\lvar{\lab',v'} \in\calI$ and needs to compute
$\pts_{*}^i(\lvar{\Delta_i,\lab',v'})$, where
$* (\leqslant \bgt_i$)
represents an unknown budget remaining,
with $(\lab',v')$ being $(\lab,v)$ possibly
(in a cycle).

Presently, \vfsu exploits two types of reuse
to improve efficiency with no loss of precision in a 
hybrid manner:
\begin{description}
\item[\fbox{Backward Reuse: $(\lab',v')\in \calI$}] If 
$\pts_{*}^j(\lvar{\Delta_j,\lab',v'})$, where $j\leqslant i$, was previously
cached, then
$\pts_{*}^i(\lvar{\Delta_i,\lab',v'}) = \pts_{*}^j(\lvar{\Delta_j,\lab',v'})$, provided that 
$\pts_{*}^j(\lvar{\Delta_j,\lab',v'})$ is a sound
representation of $\pts_{*}^i(\lvar{\Delta_i,\lab',v'})$.
For example, if 
$\textbf{\textit{Stage[i]}}=FS$ and $\textbf{\textit{Stage[j]}}=FSCS$,
then $\pts_{*}^{FSCS}(\lvar{\cxt',\lab',v'})$ can be reused for
$\pts_{*}^{FS}(\lvar{\lab',v'})$ if $\cxt'$ is \emph{true},
representing a context-free points-to set. 
\item[\fbox{Forward Reuse: $(\lab',v')\in \calQ$}]
If $\pts_{\bgt_j}^j(\lvar{\Delta_j,\lab',v'})$,
where $j>i$, was previously
computed and cached but
$\pts_{\bgt_k}^k(\lvar{\Delta_k,\lab',v'})$ was not,
where $0\leqslant k <j$, then \vfsu will also fail for
$\pts_{*}^k(\lvar{\Delta_k,\lab',v'})$, 
where $i \leqslant k<j$,
since $*\leqslant \bgt_k$. Therefore,
\vfsu will exploit the second type of reuse by setting
$\pts_{*}^i(\lvar{\Delta_i,\lab',v'}) = 
\pts_{\bgt_j}^j(\lvar{\Delta_j,\lab',v'})$.
\end{description}

Of course, many other schemes are possible with or without precision
loss.

\newcommand{\ato}{\textsf{UAO}\xspace}
\newcommand{\upc}{\textsf{Uninit}\xspace}

\section{EVALUATION}
\label{sec:eval}

We evaluate \vfsu by choosing detection of uninitialized pointers
as a major client. The objective
is to show that \vfsu is effective in answering client
queries, in environments with small time and memory budgets such as IDEs,
by facilitating efficiency and precision tradeoffs 
in a hybrid multi-stage analysis framework. 
We provide evidence to
demonstrate a good correlation between the number
of strong updates performed on-demand and the degree of precision
achieved during the analysis.

\subsection{Implementation}
\label{sec:impl}

We have implemented \vfsu 
in LLVM (3.5.0).  The source files of a program
are compiled under ``-O0''  (to facilitate 
detection of undefined values \cite{Zhao2012})
into bit-code by \texttt{clang} 
and then merged using the \texttt{LLVM Gold Plugin} 
at link time to produce a whole program bc file. The 
compiler option \texttt{mem2reg} is applied to
promote memory into registers. Otherwise,
SUPA will perform more strong updates on memory locations
that would otherwise be promoted to registers, favoring SUPA undesirably.

All the analyses evaluated are field-sensitive. 

Positive weight cycles 
that arise from processing fields of struct objects are collapsed~\cite{pearce2007efficient}. 
Arrays are considered monolithic so that the elements in
an array are not distinguished. 
Distinct allocation sites (i.e., \textsc{AddrOf} 
statements) are modeled by
distinct abstract objects.

We build the SVFG for a program based on our open-source software, \svf \cite{SVF}. The def-use chains are
pre-computed by 
Andersen's algorithm flow and context-insensitively.
In order to compute soundly and precisely the points-to information in
a value-flow cycle,
\vfsu retraverses the cycle whenever new points-to information is
discovered until a fix point is reached.

To compare \vfsu with whole-program analysis, 
we have implemented a sparse flow-sensitive (\SFS)
analysis described in~\cite{hardekopfflow}
also in LLVM, 
as \SFS is a recent solution yielding exactly
the flow-sensitive precision with good scalability.
However, there are some differences.
In \cite{hardekopfflow}, \SFS was implemented in 
LLVM (2.5.0), by using imprecisely 
pre-computed call graphs and representing
points-to sets with binary decision diagrams (BDDs).
In this paper, just like \vfsu, \SFS is 
implemented in LLVM (3.5.0), by building 
a program's call graph on the fly 
(Section~\ref{sec:dda}) and representing 
points-to sets with sparse bit vectors.

We have not implemented a whole-program 
FSCS pointer analysis in LLVM. There is no
open-source implementation either in
LLVM. According to~\cite{AchRob11}, existing FSCS 
algorithms for C ``do not scale even for an order of
magnitude smaller size programs than those analyzed'' 
by Andersen's algorithm. As shown here, \SFS can 
already spend hours on analyzing some programs 
under 500 KLOC.

\subsection{Methodology}
\label{sec:meth}

We choose uninitialized pointer detection as a
major client,
named \upc, which
requires strong update analysis to 
be effective.
As a common type of bugs in C programs, uninitialized
pointers are dangerous, as 
dereferencing them can cause system crashes and 
security vulnerabilities. For \upc,
flow-sensitivity is crucial. Otherwise,
strong updates are impossible, making \upc checks
futile.

We will show that
\vfsu can answer \upc's on-demand
queries efficiently while achieving nearly 
the same precision as \SFS.
For C, global and static variables are default initialized, but local variables are not. 
In order to mimic the default uninitialization
at a stack or heap allocation site
$\lab\!:\!p\!=\!\&a$ for an uninitialized pointer $a$, 
we add a special store 
$*p\!=\!u$ immediately after $\lab$,
where $u$ points to an \emph{unknown abstract object}
(\ato), $u_{a}$.
Given a load $x\!=\!*y$, we 
can issue a points-to query for $x$ to detect its
potential uninitialization.
If $x$ points to a $u_a$ (for some $a$), then
$x$ may be uninitialized. By performing 
strong updates more often, a flow-sensitive analysis can find
more \ato's that do not reach any pointer and  thus
prove more pointers to be initialized. Note that \SFS 
can yield false positives since, for example, path
correlations are not modeled.

We do not introduce $\ato$'s
for the local variables involved in recursion and
array objects since they cannot be 
strongly updated. We also ignore all the 
default-initialized stack or heap 
objects (e.g., those created by \texttt{calloc()}).

\begin{table}[t]
	\centering
		\caption{Program characteristics.
		\label{tab:stat}}
	\scalebox{1}{
\addtolength{\tabcolsep}{0pt}
		\begin{tabular}{|l | |l | l | l | l | l | l|}
			\hline
			\multicolumn{1}{|c||}{Program}	&KLOC	&Statements	&\multicolumn{1}{c|}{Pointers}	&\multicolumn{1}{c|}{Allocation Sites} &\multicolumn{1}{c|}{Queries} \\ \hline
			\hline
			spell-1.1   &0.8 & 1011      &1274      &42   & 17 \\ \hline
			bc-1.06   &14.4 &17018	&15212	&654	   &689 \\ \hline
			milc-v6	&15	&11713	&29584	&865	&3	\\ \hline
			less-451	&27.1	&6766	&22835	&1135	&100\\ \hline
			sed-4.2     &38.6       &25835     &34226    &395        & 1191 \\ \hline
			hmmer-2.3	&36	&27924	&74689	&1472	&2043\\ \hline
			make-4.1	&40.4	&14926	&36707	&1563	&1133\\ \hline
			gzip-1.6          &64.4        &22028    &25646     &1180      &551 \\ \hline
			a2ps-4.14	&64.6	&49172	&116129	&3625	&5065\\ \hline
			bison-3.0.4	&113.3	&36815	&90049	&1976	&4408\\ \hline
			grep-2.21	&118.4	&10199	&33931	&1108	&562\\ \hline
			tar-1.28	&132	&30504	&85727	&3350	&909\\ \hline
			wget-1.16  &140.0 &51556 & 63199  &726   	&1142\\ \hline
			bash-4.3	&155.9	&59442	&191413	&6359	&5103\\ \hline
			gnugo-3.4   &197.2 & 369741      &286986      &27511   & 1970 \\ \hline
			sendmail-8.15	&259.9	&86653	&256074	&7549	&2715\\ \hline
			vim-7.4	&413.1	&147550	&466493	&8960	&6753\\ \hline
			emacs-24.4	&431.9	&189097	&754746	&12037	&4438\\ \hline
Total&	2263.0&	1157950&	2584920&	80507&	38792 \\ \hline
		\end{tabular}
	}
	\end{table}

We generate meaningful points-to queries, one query 
for the top-level variable $x$ at each load $x\!=\!*y$. However, we ignore this
query if $x$ is found not to point to any $\ato$ 
by pre-analysis. 
This happens only when 
$x$ points to either default-initialized
objects or unmodeled
local variables in recursion cycles or arrays.
The number of queries issued in each program is
listed in the last column in Table~\ref{tab:stat}.

\subsection{Experimental Setup}

We use a machine
with a 3.7GHz Intel Xeon 8-core CPU and 64 GB memory.
As shown in Table~\ref{tab:stat}, we have selected a total of
18 open-source programs 
from a variety of domains:
\texttt{spell-1.1} (a spelling checker),
\texttt{bc-1.06} (a numeric processing language),
\texttt{milc-v6} (quantum chromodynamics), 
\texttt{less-451} (a terminal pager),  
\texttt{sed-4.2} (a stream editor),  
\texttt{milc-v6} (quantum chromodynamics), 
\texttt{hmmer-2.3} (sequence similarity searching), 
\texttt{make-4.1} (a build automation tool), 
\texttt{a2ps-4.14} (a postScript filter), 
\texttt{bison-3.04} (a parser), 
\texttt{grep-2.2.1} (string searching),
\texttt{tar-1.28} (tar archiving),
\texttt{wget-1.16} (a file downloading tool),
\texttt{bash-4.3} (a unix shell and command language),
\texttt{gnugo-3.4} (a Go game),
\texttt{sendmail-8.15.1} (an email server and client),
\texttt{vim74} (a text editor), and 
\texttt{emacs-24.4} (a text editor).

For each program, Table~\ref{tab:stat} lists its
number of lines of code, statements which are LLVM instructions relevant to our pointer analysis, 
pointers, allocation sites (or \texttt{AddrOf} statements), and 
queries issued (as discussed in Section~\ref{sec:meth}).

\subsection{Results and Analysis}

We evaluate \vfsu with two configurations,
\vfsuFS and $\vfsuFSCS$.  
\vfsuFS is a one-stage FS analysis by 
considering 
flow-sensitivity only. $\vfsuFSCS$ is a two-stage analysis
consisting of FSCS and FS applied in that order.

\subsubsection{Evaluating \vfsuFS}

When assessing $\vfsuFS$, we consider two 
different criteria:
efficiency (its analysis time and memory 
usage per query) and precision (its competitiveness 
against \SFS). 
For each query, its analysis budget, denoted $B$, 
represents the maximum number of def-use chains that can be
traversed. We consider a wide range of budgets with $B$ falling
into $[10, 200000]$.

\vfsuFS is highly effectively. With $B=10000$,
\vfsuFS is nearly as precise as \SFS,
by consuming about 0.18 seconds
and 65KB of memory per query, on average.

\begin{figure*}[th]
\begin{center}
\scalebox{0.73}{
\begin{tabular}{l@{\hspace{10mm}}l}

\hfil\Large\textbf{\textcolor{blue}{}}\hfil\includegraphics[scale=0.84]{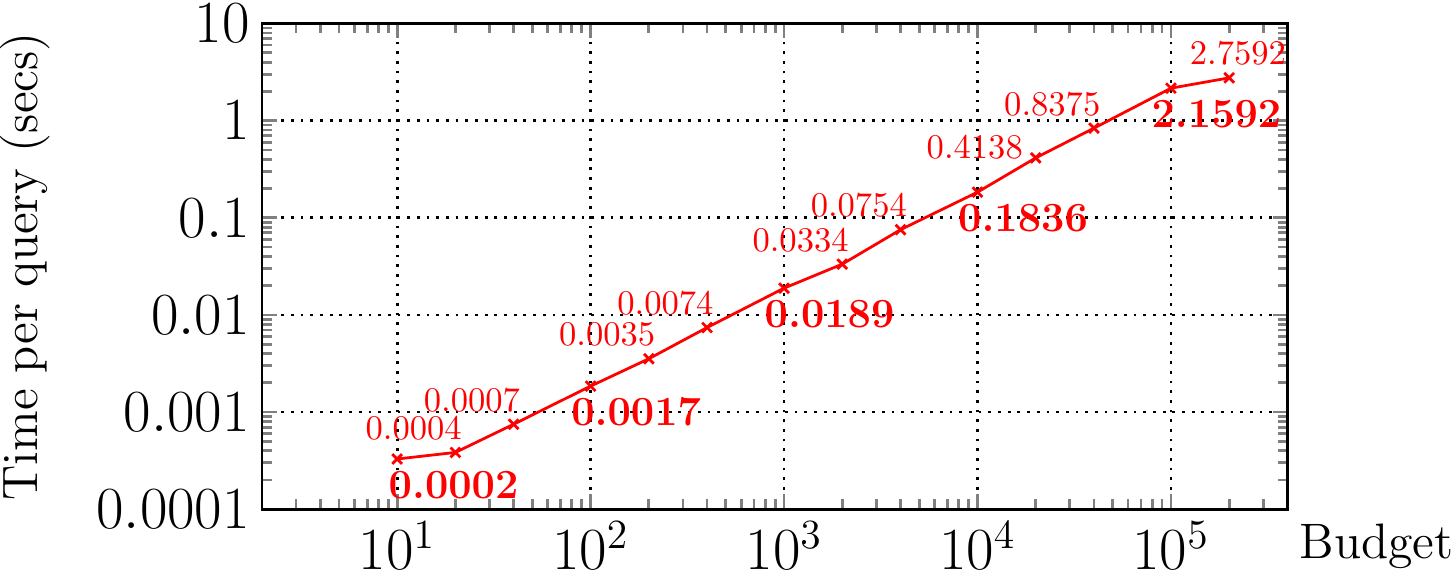}
\\
\hspace{45mm} \Large (a)  Analysis Time\\ \\

\hfil\Large\textbf{\textcolor{blue}{}}\hfil\includegraphics[scale=0.84]{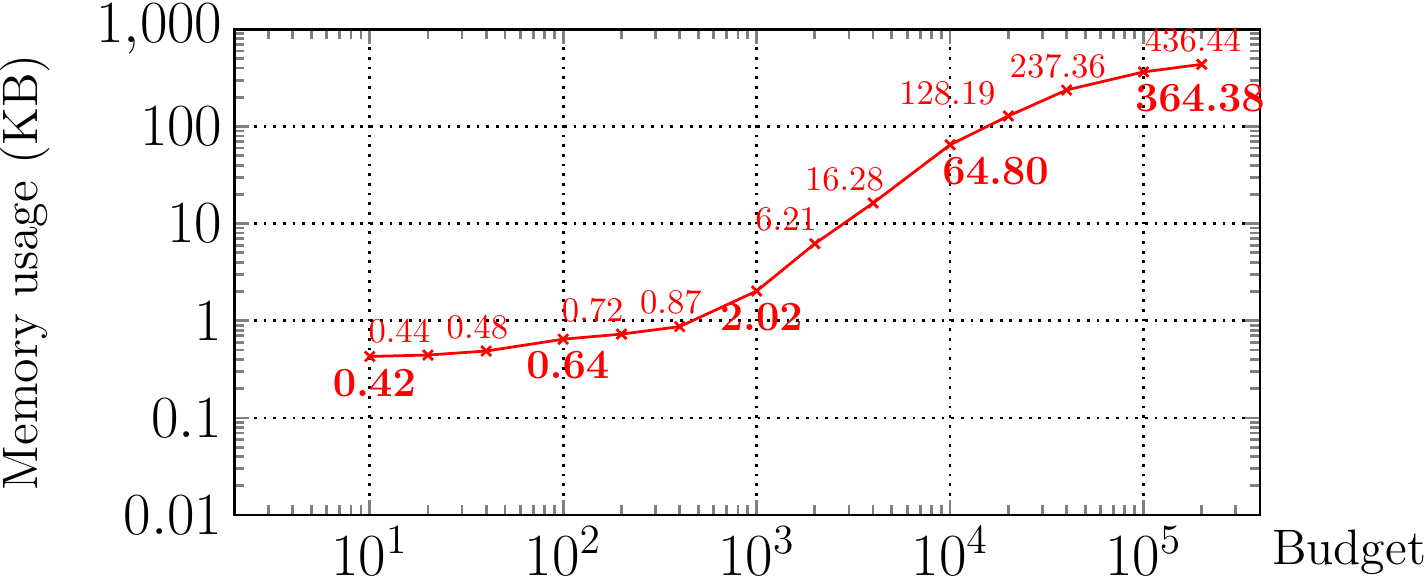}
\\[0.5ex]
\hspace{45mm}\Large (b) Memory Usage
\end{tabular}
}
\end{center}
\caption{Average analysis time and memory usage
per query consumed by \vfsuFS under different analysis 
budgets (with both axes being logarithmic).}
\label{fig:running_time}

\end{figure*}

\begin{table}[hbt]
        \begin{center}
\caption{Pre-processing times taken by pre-analysis shared by
        \vfsu and \SFS and analysis times of \SFS (in seconds). \label{fig:pre-analysis}}
\scalebox{1}{
\renewcommand{\arraystretch}{1}
\addtolength{\tabcolsep}{0pt}
                \begin{tabular}{|c||r|r|r|r|r|}
\hline
\multirow{3}{*}{Program} & \multicolumn{3}{c|}{Pre-Analysis Times } & \multicolumn{1}{c|}{\multirow{3}{*}{Analysis Time of SFS}} \\ 
& \multicolumn{3}{c|}{Shared by \vfsu and \SFS} & \\ \cline{2-4}
& \multicolumn{1}{c|}{Andersen's Analysis}  & \multicolumn{1}{c|}{SVFG} & \multicolumn{1}{c|}{Total} &  \\ \hline \hline
spell  &0.01	&0.01&0.01	&0.01 \\\hline
bc     &0.35	&0.21&0.56	&0.98 \\\hline	
milc	&0.42	&0.1	&0.52	&0.16 \\\hline
less	&0.42	&0.37	&0.79	&1.94	\\\hline
sed	&1.38	&0.34	&1.73	&5.46	\\\hline
hmmer	&1.57	&0.46	&2.03	&1.07	\\\hline
make	&1.74	&1.17	&2.91	&13.94	\\\hline
gzip	&0.27	&0.10	&0.37	&0.20	\\\hline
a2ps	&7.34	&1.31	&8.65	&60.61	\\\hline
bison	&8.18	&3.66	&11.84	&44.16	\\\hline
grep	&1.44	&0.17	&1.61	&2.39	\\\hline
tar	&2.73	&1.71	&4.44	&12.27	\\\hline
wget	&1.86	&0.90	&2.76	&3.47	\\\hline
bash	&53.48	&44.07	&97.55	&2590.69	\\\hline
gnugo  &5.68	&2.75&8.44	&9.86 \\\hline
sendmail	&24.05	&23.43	&47.48	&348.63\\\hline
vim	&445.88	&85.69	&531.57	&13823\\\hline
emacs	&135.93	&146.94	&282.87	&8047.55\\\hline
                 \end{tabular}
}
         \end{center}
\end{table}

\paragraph{\underline{Efficiency}} 
Figure~\ref{fig:running_time}(a) shows 
the average analysis time per query 
for all the programs
under a given budget, with about
0.18 seconds when $B\!=\!10000$ and about
2.76 seconds when
$B\!=\!200000$. 
Both axes are logarithmic. 
The longest-running queries can take an order 
of magnitude as long as the average cases. However, 
most queries 
(around 80\% across the programs) 
take much less than the average cases. 
Take \textsf{emacs} for example. 
\SFS takes over two hours (8047.55 seconds)
to finish. In contrast, \vfsuFS spends less than 
ten minutes (502.10 seconds) 
when $B\!=\!2000$, with an average per-query
time (memory usage) of 
0.18 seconds (0.12KB), and produces 
the same answers for all the queries as \SFS
(shown in Figure~\ref{fig:precision} and explained
below).

For \vfsu, its pre-analysis is lightweight,
as shown in Table~\ref{fig:pre-analysis},
with \texttt{vim} taking the longest at 531.57
seconds. The same pre-analysis
is also shared by \SFS in order to enable its own
sparse whole-program
analysis.
The additional time taken by \SFS for analyzing
each program entirely is given in the last column.


Figure~\ref{fig:running_time}(b) shows the average
memory 
usage per query under different budgets. Following the common practice, we measure the real-time memory usage by reading the virtual memory information (\texttt{VmSize}) from the
linux kernel file (\texttt{/proc/self/status}).
The memory monitor starts after the pre-analysis 
to measure the memory usage for answering queries 
only. The average amount of memory consumed per 
query is small, with about 65KB when $B=10000$
and about 436KB when $B\!=\!200000$. 
Even under the largest budget $B=200000$ evaluated, 
\vfsuFS never uses more than 3MB for any single query
processed.

\begin{figure}[th]
\begin{center}
\scalebox{0.74}{
\begin{tabular}{l}
\hspace{-7mm}
\hfil\Large\textbf{\textcolor{blue}{}}\hfil\includegraphics[scale=1.02]{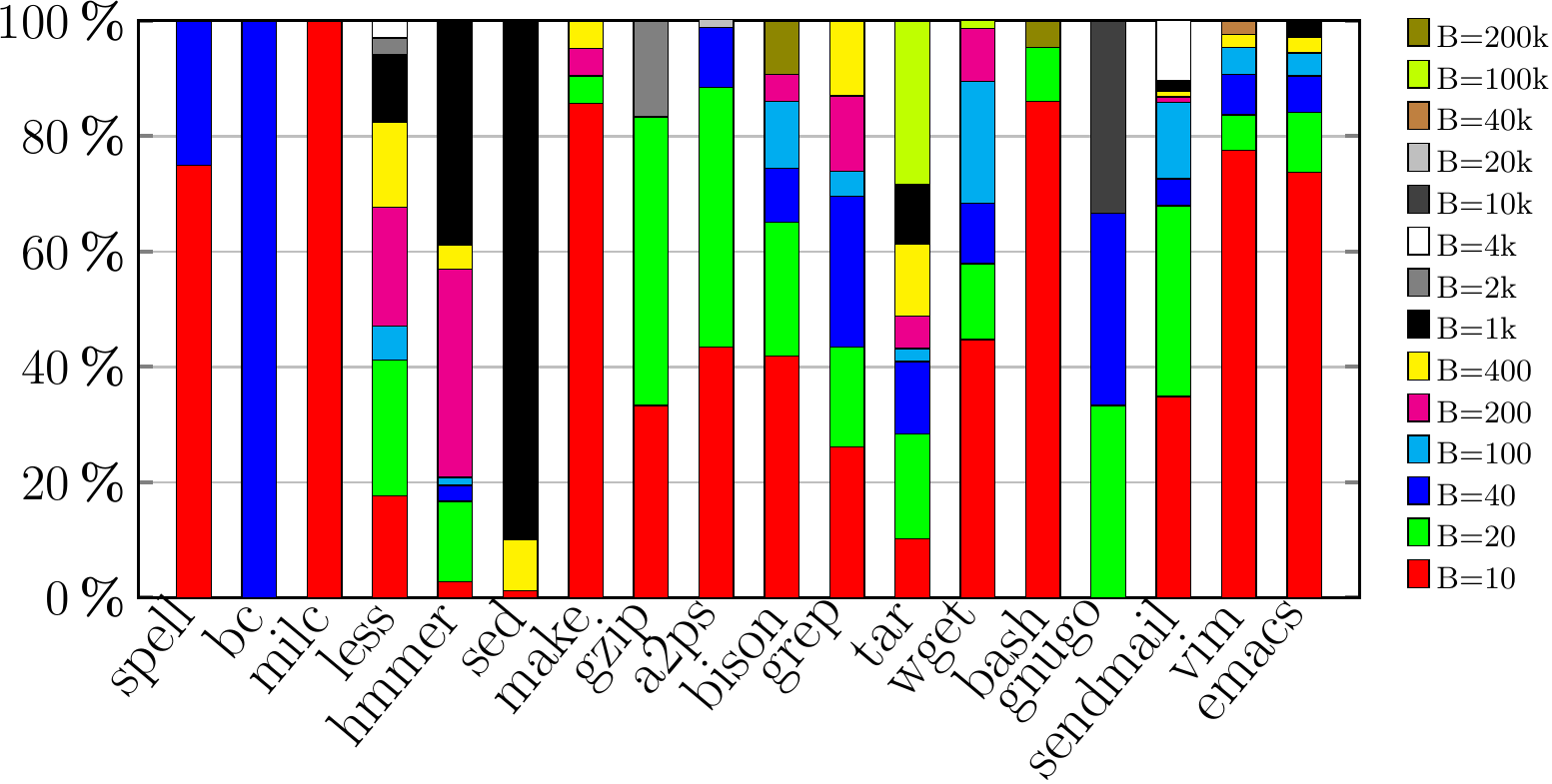}
\\ \hline
\end{tabular} 
}
\end{center}
\caption{Percentage of queried variables
proved to be initialized by \vfsuFS over \SFS
under different budgets.
\label{fig:precision}
}
\end{figure}

\paragraph*{\underline{Precision}}
Given a query $\pts({\lvar{\lab,p}}$), 
$p$ is initialized if no \ato is pointed
by $p$ and potentially uninitialized otherwise. 
We measure the precision of \vfsuFS in terms of 
the percentage of queried variables
proved to be initialized by comparing with \SFS,
which yields the best precision achievable
as a whole-program flow-sensitive analysis.

Figure~\ref{fig:precision} reports our results.
As $B$ increases, the precision of \vfsuFS 
generally improves. 
With $B\!=\!10000$, \vfsuFS can answer correctly
97.4\% of all the queries from the 18 programs.
These results indicate that our analysis is 
highly accurate, even under tight budgets.
For the 18 programs except
\texttt{a2ps}, \texttt{bison} and \texttt{bash}, 
\vfsuFS produces the same answers for all the
queries when $B=100000$ as \SFS. When
$B=200000$ for these three programs, 
\vfsu becomes as precise as \SFS, by taking an average
of 0.02 seconds (88.5KB) for 
\texttt{a2ps},
0.25 seconds (194.7KB)
for \texttt{bison}, and 
3.18 seconds (1139.3KB) for \texttt{bash}, per query. 

\begin{figure*}[h]
  \begin{center}
\begin{tabular}{l}
\hspace{-3mm}
\hfil\Large\textbf{\textcolor{blue}{\hfil\includegraphics[scale=0.73]{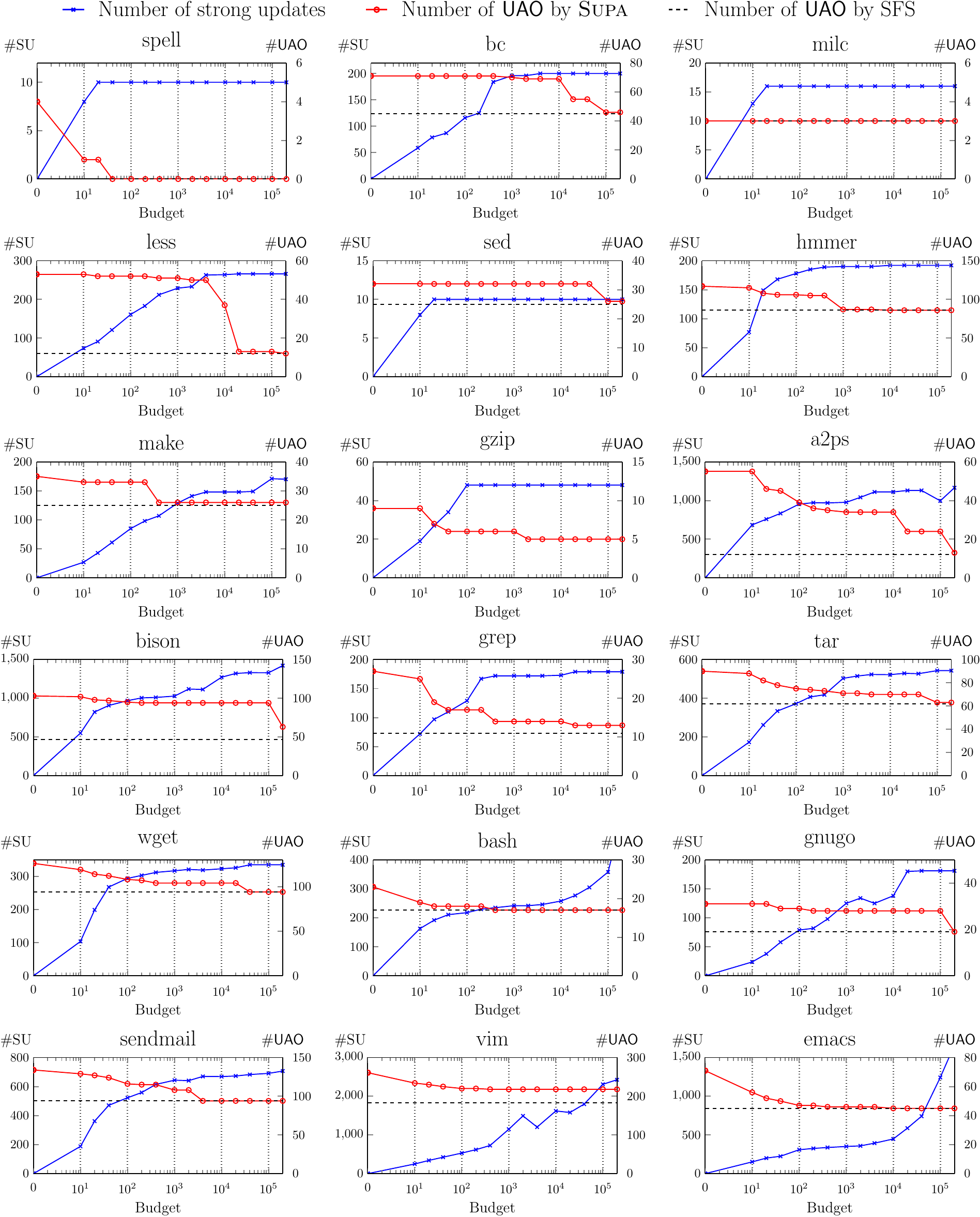}}}
\\ \hline
\end{tabular} 
  \end{center}
  \caption{Correlating the number of strong updates 
  with the number of \ato's under \vfsuFS with 
  different analysis budgets.}
  \label{fig:obj_su}
\end{figure*}

\paragraph*{\underline{Understanding On-Demand Strong Updates}} 
Let us examine the benefits achieved by 
\vfsuFS in answering client queries by
applying on-demand strong updates.
For each program, Figure~\ref{fig:obj_su}
shows a good correlation between the number of 
strong updates performed (\#SU on the left
y-axis) in a blue curve
and the number of \ato's reaching some 
uninitialized pointers (\#\ato on the right
y-axis) in a red curve under varying budgets (on the
logarithmic x-axis). The number
of such \ato's reported by \SFS is shown as
the lower bound for \vfsuFS in a dashed line.

In most programs,
\vfsuFS performs increasingly more
strong updates to block increasingly more \ato's to
reach the queried variables as the
analysis budget $B$ increases,  because \vfsuFS falls back 
increasingly less frequently from FS to the
pre-computed points-to information. 
When $B$ increases, \vfsuFS
can filter out more spurious value-flows in the SVFG 
to obtain more precise points-to information, thereby 
enabling more strong updates to kill the \ato's.

When $B=200000$, \vfsuFS gives the same answers as
\SFS in all the 18 programs except
\texttt{bison} and
\texttt{vim}, which causes 
\vfsuFS to report 16 and 35 more, respectively.

For some programs such as \texttt{spell}, \texttt{bc}, \texttt{milc},
\texttt{hmmer} and \texttt{grep}, most of their strong 
updates happen under small budgets (e.g.,
$B=1000$).  In \texttt{hmmer}, for example,
192 strong updates are performed when $B=10000$. 
Of the 5126 queries issued, \vfsuFS runs 
out-of-budget for only three queries, which
are all fully resolved when
$B=200000$, but with no further 
strong updates being observed.

For programs like \texttt{bison}, \texttt{bash}, \texttt{gnugo} and 
\texttt{emacs},
quite a few strong updates take place when $B>1000$. 
There are 
two main reasons. First, these programs
have many indirect call edges (with
8709 in \texttt{bison},
1286 in \texttt{bash},
23150 in \texttt{gnugo} and
4708 in \texttt{emacs}), 
making their on-the-fly call graph construction 
costly (Section~4.1.2). Second, there are many
value-flow cycles (with over 50\% def-use chains 
occurring in cycles in \textsf{bison}), making their
constraint resolution
costly (to reach a fixed point). Therefore,
relatively large budgets are needed to enable more strong updates to be performed.

Interestingly, in programs such as \texttt{a2ps}, \texttt{gnugo} and 
\texttt{vim}, fewer strong updates are observed
when larger budgets are used. 
In \texttt{vim}, the number of strong updates
performed is 1492 when $B=2000$ but drops to
1204 when $B=4000$. This is due to the forward
reuse described in Section~\ref{sec:sensivity}.
When answering a query $\pts(\lvar{\lab,v})$
under two budgets $B_1$ and $B_2$,
where $B_1<B_2$, \vfsuFS has reached 
$\lvar{\lab',v'}$ and needs to compute
$\pts(\lvar{\lab',v'})$ in each case.
\vfsuFS may fall back to the flow-insensitive 
points-to set of
$v'$ under $B_1$ but not $B_2$, 
resulting
in more strong updates performed under $B_1$ in the 
part of the program that is not explored under $B_2$.

\subsubsection{Evaluating \vfsuFSCS}

For C programs, flow-sensitivity is regarded 
as being important for achieving useful high precision. 
However, context-sensitivity can be important
for some C programs, in terms of both obtaining 
more precise points-to information and enabling more
strong updates. Unfortunately, whole-program
analysis does not scale well to large programs when both
are considered (Section~\ref{sec:impl}).

\begin{table}[th]
\centering
\caption{Average analysis times consumed and \ato's
reported by \vfsuFSCS (with a budget of 10000 in each
stage) and \vfsuFS (with a budget of 10000 in total).
\label{fig:cxt}}
\renewcommand{\arraystretch}{1}
\addtolength{\tabcolsep}{1pt}
\begin{tabular}{|l|r|r|r|r|}
\hline
\multirow{2}{*}{Program} & \multicolumn{2}{c|}{\vfsuFS} & \multicolumn{2}{c|}{\vfsuFSCS} \\ \cline{2-5}
 & Time (ms) & $\#\ato$ & Time (ms) & $\#\ato$ \\ \hline \hline
spell & 0.01  & 0 & 0.01   & 0 \\ \hline
bc & 18.35  & 69 & 287.23   & 69 \\ \hline
 milc & 0.02  & 3 & 14.52  & \bf{0} \\ \hline
 less & 15.15  & 37 & 92.41  & 37 \\ \hline
 sed & 355.60  & 32 & 4725.42   & 32 \\ \hline
 hmmer & 11.41  & 86 & 135.05  & \bf{71} \\ \hline
 make & 124.40  & 26 & 229.44  & 26 \\ \hline
 gzip & 0.64  & 5 & 4.28  & 5 \\ \hline 
 a2ps & 126.01  & 34 & 448.26  & \bf{32} \\ \hline
 bison & 465.54  & 94 & 529.20  & \bf{86} \\ \hline
 grep & 124.46  & 14 & 197.66  & 14 \\ \hline
 tar & 26.31  & 70 & 83.10  & \bf{68} \\ \hline
 wget & 24.51  & 104 & 84.90  & {104} \\ \hline 
 bash & 188.69  & 17 & 327.16  & 17 \\ \hline
 gnugo & 72.73  & 28 & 80.08   & \bf{27} \\ \hline
 sendmail & 200.32  & 94 & 250.19  & \bf{85} \\ \hline
 vim & 168.67  & 218 & 473.25  & 218 \\ \hline
 emacs & 159.22  & 45 & 222.65  & 45 \\ \hline
 \end{tabular}
\end{table}

In this section, we demonstrate that \vfsu
can exploit both flow- and context-sensitivity
effectively \emph{on-demand} in a hybrid
multi-stage analysis framework, providing improved
precision needed by some programs.
Table~\ref{fig:cxt} compares \vfsuFSCS (with 
a budget of 20000 divided evenly in its
FSCS and FS stages) with \vfsuFS (with a budget 
of 10000 in its single FS stage). The maximal depth
of a context stack allowed is 3.
By allocating
the budgets this way, we can investigate some
additional precision benefits achieved by considering
both flow- and context-sensitivity.

In general, \vfsuFSCS has longer query
response times than \vfsuFS due to the larger
budgets used in our setting and the times taken
in handling context-sensitivity.
In \texttt{milc}, \texttt{hmmer}, \texttt{a2ps}, \texttt{bison}, \texttt{tar} , \texttt{gnugo} and \texttt{sendmail}, 
\vfsuFSCS reports fewer \ato's than \vfsuFS, for
two reasons. First, \vfsuFSCS
can perform strong updates context-sensitively
for stack and global objects, resulting in 
0 \ato's reported by \vfsuFSCS for \textsf{milc}.
Second, \vfsuFSCS can perform strong updates
to context-sensitive singleton heap objects defined in
Section~\ref{sec:inter}, by eliminating
8 \ato's in \texttt{bison}, 1 in \texttt{tar} 
and 1 in \texttt{sendmail}, which have been reported by \vfsuFS.

\section{Case Studies}
\label{sec:cases}

\begin{figure*}[ht]
\centering
\scalebox{0.8}{
\begin{tabular}{@{}l@{}}
\hspace*{-1ex}
\hfil\Large\textbf{\textcolor{blue}{}}\hfil\includegraphics[scale=1]{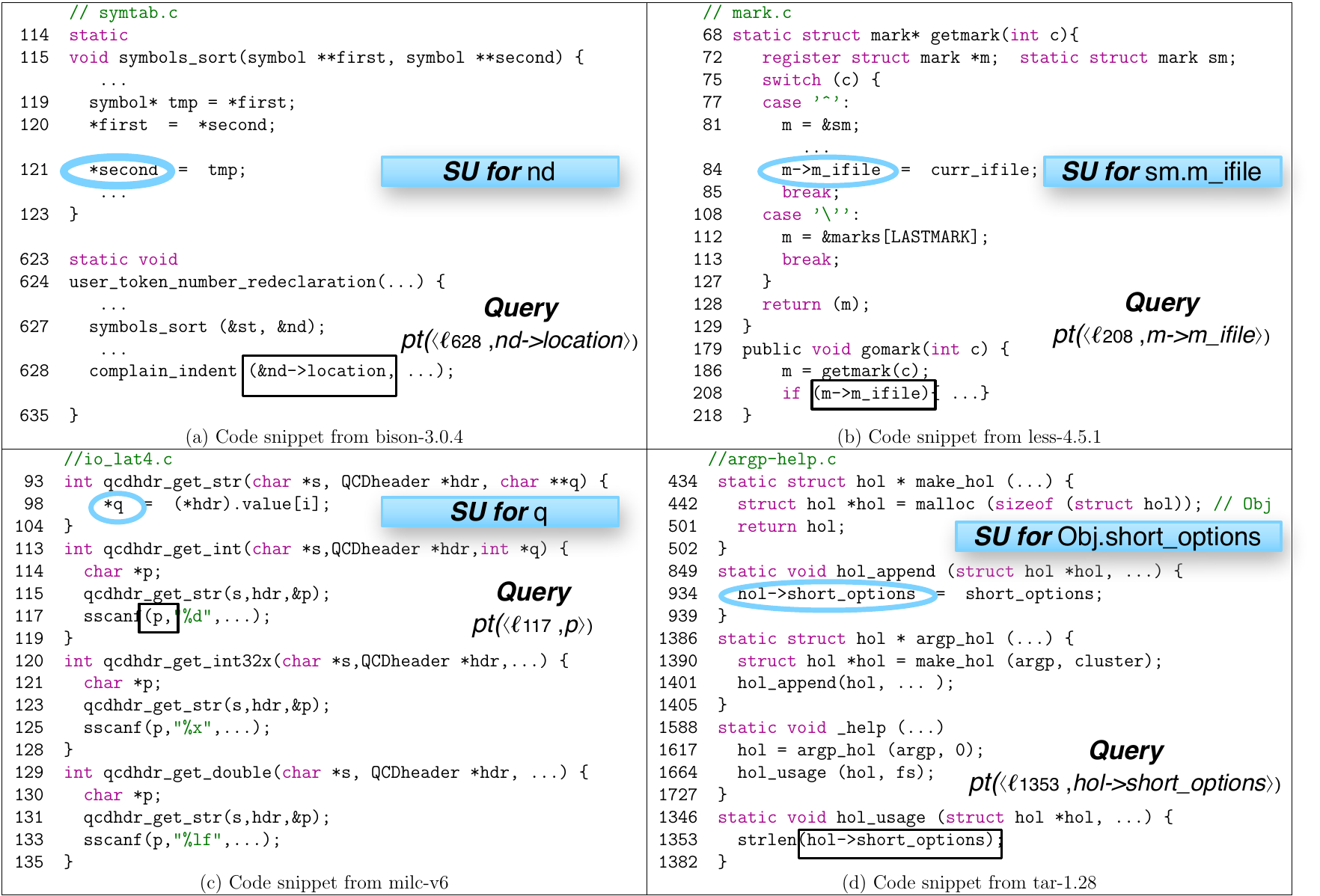}
\end{tabular} 
}
\caption{Selected code snippets.}
\label{fig:code_snippet}
\end{figure*}

We examine some real code to see how client 
queries are answered precisely with on-demand strong updates 
under four different scenarios.  
\begin{description}
\item[Figure~\ref{fig:code_snippet}(a)] There is 
a swap from \texttt{bison}. In line 121,
\texttt{second} points to a singleton stack object 
\texttt{nd} passed from line 627. So a strong
update is applied. When querying 
\texttt{nd->location} in line 628, \vfsu knows that
\texttt{nd} points to what
\texttt{st} pointed to before.

\item [Figure~\ref{fig:code_snippet}(b)]
In the code fragment from \texttt{less},
\texttt{m->m\_ifile} is initialized in two 
different branches, one recognized due to a 
strong update performed at the store  in
line 84 and one due to the default initialization
in line 112. According to \vfsu,
\texttt{m->m\_ifile} in line 208 is initialized.

\item[Figure~\ref{fig:code_snippet}(c)]

In the code fragment from \texttt{milc}, \texttt{q} in line 98 can 
point to several stack variables that are
all named $p$ in lines 115, 123 and 131.
With context-sensitivity, \vfsu finds that
\texttt{q} points to one 
singleton under each context. Thus,
a strong update is performed 
so that each stack variable becomes properly 
initialized when queried at each call to
\texttt{sscanf()}.

\item[Figure~\ref{fig:code_snippet}(d)]
In the code fragment from 
\texttt{tar}, \texttt{hol} in line 1390 points to 
a heap object $o$ allocated in line 442. With $o$
treated as a context-sensitive singleton (requiring
a context stack of at least depth 1), a strong update
can be performed in line 934 to initialize 
its field \texttt{short\_options} properly.  
\end{description}

\section{Parallelizing \vfsu}
\label{sec:para}

To demonstrate that \vfsu is amenable to 
parallelization as a demand-driven analysis,
we have parallelized \vfsuFS by using Intel 
Threading Building Blocks (TBB). 
A \emph{concurrent\_queue} is used 
to store all the queries issued from a program. 
We use a \emph{task\_group} to allocate tasks for 
computing the queries from \emph{concurrent\_queue}
in parallel. The cached points-to information
is shared with a \emph{concurrent\_hash\_map}.

\begin{figure*}[t]
\begin{tabular}{l}
\hspace{-8mm}
\hfil\Large\textbf{\textcolor{blue}{}}\hfil\includegraphics[scale=0.28]{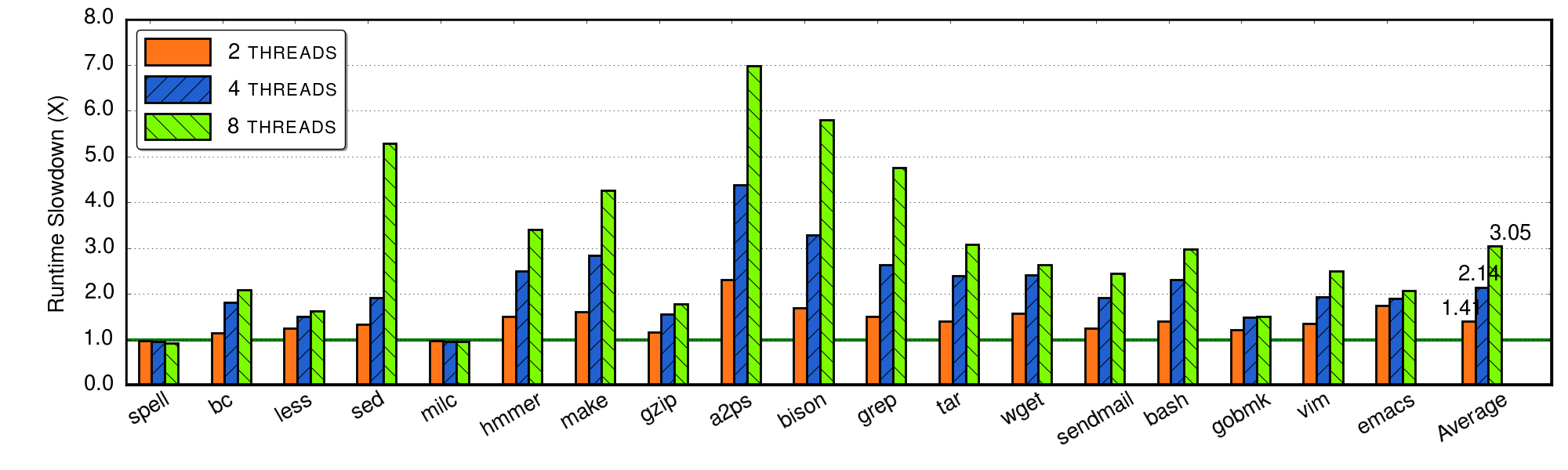}
\end{tabular} 
\caption{Speedups of \vfsuFS when parallelized
over its sequential version with two, four and eight threads
($B=10000)$.}
\label{fig:speedup}
\end{figure*}

Figure~\ref{fig:speedup} shows the speedups achieved
by parallelization over the sequential setting with
$B=10000$. 
With eight threads, the average speedup for the
18 programs is 3.05x and the
maximum speedup observed is 6.9x at \texttt{a2ps}. The time for each setting excludes the pre-analysis time.
Some programs enjoy better
speedups than others. There are three main reasons. First, 
some programs, such as \texttt{spell}, \texttt{less} and 
\texttt{milc}, have relatively few queries issued. 
Therefore, the performance benefits achieved from
query parallelization can be small. Second, different
queries take different times to answer, resulting in
different degrees of workload imbalance in different programs.
Third, different programs suffer from different
synchronization overheads in accessing the cached
points-to information in \emph{concurrent\_hash\_map}.

\section{RELATED WORK}
\label{sec:rela}

Demand-driven and whole-program approaches 
represent two important solutions to 
long-standing pointer analysis problems.
While a whole-program pointer analysis aims to
resolve all the pointers in the program, a 
demand-driven pointer analysis is designed to resolve
only a (typically small) subset of the set of these pointers in a 
client-specific manner.
This work is not concerned with developing  
an ultra-fast whole-program pointer analysis. Rather, 
our objective is to design a staged demand-driven strong 
update analysis framework that facilitates
efficiency and precision tradeoffs
flow- and context-sensitively according to the needs of a client
(e.g., user-specified budgets).
Below
we limit our discussion to the work that is 
most relevant to \vfsu.

\subsection{Flow-Sensitive Pointer Analysis}

Strong updates require pointers to be analyzed flow-sensitively with respect to program execution order. Whole-program flow-sensitive pointer analysis has been studied extensively in the literature.  
\citeauthor{choi1993efficient}~\citeyear{choi1993efficient} and \citeauthor{emami1994context}~\citeyear{emami1994context} gave some formulations in an iterative
data-flow framework~\cite{kam1977monotone}. 
\citeauthor{wilson1995efficient}~\citeyear{wilson1995efficient}  considered
both flow- and context-sensitivity by representing procedure
summaries with partial transfer functions, but restricted
strong updates to top-level variables only.
To eliminate unnecessary propagation of points-to information during the
iterative data-flow analysis~\cite{yu2010level,hardekopf2009semi,hardekopfflow,oh2012design}, some form of sparsity has been exploited.
The sparse value-flows, i.e., def-use chains in a program are captured by
sparse evaluation graphs (SEG)~\cite{choi1991automatic,ramalingam2002sparse}
as in \cite{hind1998assessing} and various
SSA representations such as HSSA~\cite{chow1996effective},
partial SSA~\cite{lattner2004llvm} and SSI~\cite{ananian1999static,tavares2014parameterized}. 
The def-use chains for top-level pointers,
once put in SSA, can be explicitly and precisely 
identified, giving rise to a so-called semi-sparse 
flow-sensitive analysis~\cite{hardekopf2009semi}.
Later, the idea of staged analysis
\cite{fink2008effective}
has been leveraged to make pointer analysis full-sparse for both top-level and address-taken variables by using fast Andersen's analysis as precise analysis \cite{sui2016sparse,ye2014region,hardekopfflow}. This paper is the first to exploit sparsity to improve the performance of a flow- and context-sensitive
demand-driven analysis with strong updates being performed
for C programs.

Recently, Balatsouras and Smaragdakis \cite{George2016} propose a fine-grained field-sensitive modeling technique for
performing Andersen's analysis by inferring lazily the types of heap objects in order to filter out redundant 
field derivations. This technique can be exploited to obtain
a more precise pre-analysis to improve the precision and/or 
efficiency of sparse flow-sensitive analysis.

\subsection{Demand-Driven Pointer Analysis}

Demand-driven pointer analyses
for C~\cite{Heintze:2001:DPA,Zheng:2008,Zhang:2014:ESA} 
and Java~\cite{Lu13,Shang12,yan2011demand,Sridharan:2006,Su15} are
flow-insensitive, formulated in terms of 
CFL (Context-Free-Language) reachability~\cite{Reps:1995:PID}.
\citeauthor{Heintze:2001:DPA}~\citeyear{Heintze:2001:DPA} introduced the first
on-demand Andersen-style pointer analysis for C. 
Later, \citeauthor{Zheng:2008}~\citeyear{Zheng:2008} performed alias analysis 
for C in terms of CFL-reachability flow- and context-insensitively 
with indirect function calls handled conservatively. 
Sridharan et al. gave two CFL-reachability-based formulations for 
Java, initially without considering 
context-sensitivity~\cite{Sridharan:2005:DPA} and later with 
context-sensitivity \cite{Sridharan:2006}.
\citeauthor{Shang12}~\citeyear{Shang12} and \citeauthor{yan2011demand}~ \citeyear{yan2011demand}
investigated how to summarize points-to information discovered
during the CFL-reachability analysis to improve 
performance for Java programs. \citeauthor{Lu13}~\citeyear{Lu13}
introduced an 
incremental pointer analysis with a CFL-reachability 
formulation for Java. \citeauthor{Su14}~\citeyear{Su14} demonstrated that
the CFL-reachability formulation is highly
amenable to parallelization on multi-core CPUs.
Recently, \citeauthor{Feng15}~\citeyear{Feng15} focused on answering demand
queries for Java programs in a context-sensitive
analysis framework (without performing
strong updates).
Unlike these flow-insensitive analyses, which are 
not effective for many clients like \upc, \vfsu can perform
strong updates on-demand flow and context-sensitively.

\textsc{Boomerang}~\cite{spath2016boomerang} represents
a recent flow- and context-sensitive demand-driven
pointer analysis for Java.
However, its access-path-based analysis performs only
strong updates partially at a store $a.f=\dots$, by
updating $a.f$ strongly but the aliases of $a.f.*$
weakly, where $a$ and $b$ are different top-level
variables. Let us explain this by using the following straight-line Java code and its corresponding C code. 

\begin{table}[h]
\centering
\vspace{-5mm}
\begin{tabular}{@{\hspace{8mm}}l@{\hspace{10mm}}l}
\\\hline
\begin{tabular}{l}
	\multicolumn{1}{c}{Java Code} \\ \hline
$\lab_1$:\quad q = new\ A()   \hspace*{1.4ex}\quad\textcolor{green}{// o1} \\
$\lab_2$:\quad p = q \\
$\lab_3$:\quad p.f = new A()  \quad\textcolor{green}{// o2} \\
$\lab_4$:\quad q.f = new A() \quad\textcolor{green}{// o3} \\
$\lab_5$:\quad x = p.f
\\\hline
\end{tabular}
&

\begin{tabular}{l}
	\multicolumn{1}{c}{C Code} \\ \hline
$\lab_1$:\quad q = malloc() \hspace*{1.4ex}\quad\textcolor{green}{// o1} \\
$\lab_2$:\quad p = q \\
$\lab_3$:\quad *p = malloc() \quad\textcolor{green}{// o2} \\
$\lab_4$:\quad *q = malloc() \quad\textcolor{green}{// o3} \\
$\lab_5$:\quad x = *p 
\\\hline
\end{tabular}


\\ \hline
\end{tabular}
\end{table}

Let us consider \textsc{Boomerang} first. 
At $\lab_3$, a strong update 
is performed to $p.f$ to make it point to $o2$ only.
At $\lab_4$, a strong update is performed
to $q.f$ to make it point to $o3$ but 
a weak update is performed to all its aliases
so that $p.f$ now points to not only $o2$ as before 
but also $o3$, As a result, $x$ 
points-to both $o2$ and $o3$ at $\lab_5$. 
Let us consider now \supa. 
With both flow- and context-sensitivity enforced, 
a strong update is performed to $o1$ pointed $p$
and $q$ at both $\lab_3$ and $\lab_4$, respectively. Thus, 
$x$ points to $o3$ only at $\lab_5$.

\subsection{Hybrid Pointer Analysis}

The basic idea is to find a right balance between efficiency and 
precision. For C programs,
the one-level approach~\cite{Das:2000} achieves a precision 
between Steensgaard's and Andersen's analyses by applying 
a unification process to address-taken variables only. In the case of
Java programs, context-sensitivity can be made more effective 
by considering both
call-site-sensitivity and object-sensitivity together than either
alone~\cite{kastrinis2013hybrid}.
In~\cite{Guyer:2003}, how to adjust the
analysis precision according to a client's needs is 
discussed.
\citeauthor{Zhang:2014}~\citeyear{Zhang:2014} focus on finding
effective abstractions for whole-program analyses 
written in Datalog via abstraction refinement.
Lhot\'{a}k and Chung \cite{strongupdate} trades precision for
efficiency by performing strong updates only on flow-sensitive
singleton objects but falls back to the flow-insensitive points-to
information otherwise. In this paper, we propose to carry out our
on-demand strong update analysis in a hybrid multi-stage analysis
framework. Unlike \cite{strongupdate}, \vfsu can achieve
the same precision as whole-program flow-sensitive analysis, subject to a given budget.

\subsection{Parallel Pointer Analysis}

\citeauthor{Mendez-Lojo:2010}~\citeyear{Mendez-Lojo:2010} introduced a parallel 
implementation of Andersen's analysis for C based on graph rewriting.
Their parallel analysis is flow- and context-insensitive, achieving
a speedup of up to 3X on an 8-core CPU. \citeauthor{Su15}~\citeyear{Su15}
introduces an improvement
of this parallel implementation on GPUs.
The whole-program sparse flow-sensitive pointer analysis~\cite{hardekopf2009semi} has also been 
parallelized on multi-core CPUs~\cite{Nagaraj:2013} and GPUs~\cite{Nasre:2013:TSF}. The speedups are up to 2.6X on a 8-core CPU. This paper presents the first parallel implementation of demand-driven pointer analysis with strong updates for C programs, achieving an average speedup of 3.05X on a 8-core CPU.

\section{Conclusion}
\label{sec:conc}

We have introduced, \vfsu, a demand-driven pointer
analysis that enables computing precise points-to information for C programs flow- and context-sensitively with strong
updates by
refining away imprecisely pre-computed value-flows, 
subject to some analysis budgets. \vfsu handles large 
C programs effectively
by allowing pointer analyses with different efficiency and precision tradeoffs
to be applied in a hybrid multi-stage analysis
framework. \vfsu
is particularly suitable for environments with small time
and memory budgets such as IDEs. We have evaluated
\vfsu by choosing 
uninitialized pointer detection as a major client on
18 C programs. \vfsu can achieve
nearly the same precision as whole-program
flow-sensitive analysis under small budgets.

One interesting future work is to 
investigate how to allocate budgets in \vfsu to its stages to 
improve the precision achieved in answering some time-consuming queries
for a particular client.
Another direction is to add more stages to its analysis,
by considering, for example, path correlations.

\balance

\bibliographystyle{apalike}
\bibliography{oopsla15}

\end{document}